\newtheorem{proposition}{Proposition}
\newtheorem{theorem}{Theorem}
\newtheorem{lemma}{Lemma}
\newtheorem{remark}{Remark}
\newtheorem{definition}{Definition}
\newcommand{\R}{\mathbb{R}} %real
\newcommand{\C}{\mathbb{C}} %complex
\newcommand{\F}{\mathbb{F}} %finite field
\newcommand{\Z}{\mathbb Z} %integer
\newcommand{\Tr}[1]{{\rm Tr}\, #1} %field trace
\newcommand{\cc}{\mathcal{C}}
\newcommand{\qq}{\mathcal{Q}}
\newcommand{\lf}{\mathfrak{l}}
\newcommand{\SL}{{\rm SL}(V)}
\newcommand{\hh}{\mathcal{H}} %Hilbert space
\newcommand{\lh}{\mathcal{L(H)}} %bounded linear operators on H
\newcommand{\sym}[2]{S\left(#1 , #2\right)} %symplectic product
\newcommand{\trq}[1]{{\rm tr}\left[#1\right]} %trace
\newcommand{\trt}[1]{{\rm tr}\left(#1\right)} %trace
\newcommand{\ran}{\textrm{ran}\,} %range
\newcommand{\id}{\mathbbm{1}} %identity operator
\newcommand{\lam}{\lambda}
\newcommand{\eps}{\varepsilon}
\newcommand{\vd}{\mathbf{d}} %d
\newcommand{\ve}{\mathbf{e}} %e
\newcommand{\vf}{\mathbf{f}} %d
\newcommand{\vu}{\mathbf{u}} %u
\newcommand{\vv}{\mathbf{v}} %v
\newcommand{\vw}{\mathbf{w}} %w
\newcommand{\vnull}{\mathbf{0}}%null vector
\newcommand{\Po}{\mathsf{Q}}%sharp observable
\renewcommand{\gg}{\mathcal{G}}
\renewcommand{\ss}{\mathcal{D}}
\newcommand{\Aff}[1]{L(#1)} %affine lines
\newcommand{\Af}[2]{L_{#1}(#2)} %affine lines
\begin{document}

\markboth{C.~Carmeli, J.~Schultz \& A.~Toigo}
{Covariant MUBs in even prime-power dimensional Hilbert spaces}

\title{MAXIMALLY SYMMETRIC STABILIZER MUBS IN EVEN PRIME-POWER DIMENSIONS}

\author{CLAUDIO CARMELI}
\address{\textbf{Claudio Carmeli}; D.I.M.E., Universit\`a di Genova, Via Magliotto 2, Savona, I-17100, Italy}
\email{claudio.carmeli@gmail.com}

\author{JUSSI SCHULTZ}
\address{\textbf{Jussi Schultz}; Turku Centre for Quantum Physics, Department of Physics and Astronomy, University of Turku, Turku, FI-20014, Finland}
\email{jussi.schultz@gmail.com}

\author{ALESSANDRO TOIGO}
\address{\textbf{Alessandro Toigo}; Dipartimento di Matematica, Politecnico di Milano, Piazza Leonardo da Vinci 32, Milano, I-20133, Italy, and I.N.F.N., Sezione di Milano, Via Celoria 16\\
Milano, I-20133, Italy}
\email{alessandro.toigo@polimi.it}

\begin{abstract}
One way to construct a maximal set of mutually unbiased bases (MUBs) in a prime-power dimensional Hilbert space is by means of finite phase-space methods. MUBs obtained in this way are covariant with respect to some subgroup of the group of all affine symplectic phase-space transformations. However, this construction is not canonical: as a consequence, many different choices of  covariance sugroups are possible. In particular, when the Hilbert space is $2^n$ dimensional, it is known that covariance with respect to the full group of affine symplectic phase-space transformations can never be achieved. Here we show that in this case there exist two essentially different choices of maximal subgroups admitting covariant MUBs. For both of them, we explicitly construct a family of $2^n$ covariant MUBs. We thus prove that, contrary to the odd dimensional case,  maximally covariant MUBs  are very far from being unique.
\end{abstract}

\maketitle

%%%%%%%%%%%%%%%%%%%%%%%%%%%%%%%
\section{Introduction}
%%%%%%%%%%%%%%%%%%%%%%%%%%%%%%%

The phase-space approach to finite-dimensional quantum mechanics is a very powerful tool in describing quantum systems with finitely many degrees of freedom, and as such it has found numerous applications in quantum tomography and quantum information theory \cite{Wootters87,CCSc90,Leo96,Vourdas04,Gr06,ApBeCh08,Fe11}. This approach works when the Hilbert space of the system is $\hh = \ell^2(\F)$, where $\F$ is any Galois field, and it  employs the analogy of $\hh$ with the Hilbert space $L^2(\R)$ of a free quantum particle moving along the real line. The similarity is carried over by defining a finite dimensional counterpart of the usual Wigner map, and then using it to establish a correspondence between states on $\hh$ and functions on the finite phase-space $\Omega = \F^2$.

The Wigner map is only one instance of the many objects that can be adapted from the infinite dimensional setting by simply turning the real numbers $\R$ into a finite field $\F$ with $q$ elements. Other examples of this correspondence are the finite Heisenberg group and its Schr\"odinger representation on $\hh$ \cite{Schwinger60,AusTol79,Var95}, as well as the finite symplectic group and the associated metaplectic representation \cite{Ho73,Ge77,BaIt86,Ne02,Appleby05JMP,Vourdas05}. The construction we are primarily interested in  is the one that replaces the quadrature observables on $L^2(\R)$ with a set of $q+1$ complementary orthonormal bases on $\hh$. Since such bases  constitute a set of $q+1$ mutually unbiased bases (MUBs), the phase-space approach provides a method for constructing a maximal set of MUBs in the $q$-dimensional Hilbert space $\hh$ \cite{BaBoRoVa02,GiHoWo04,Ho05,SuTo07,DuEnBeYc10}.

Maximal sets of MUBs constructed on the model of quadrature observables are sometimes referred to as {\em stabilizer MUBs} in order to point out their special nature among the family of all maximal MUBs in $\hh$. Their associated orthogonal projections are in a one-to-one correspondence with the set of the affine lines of $\Omega$, in such a way that: (1) all lines parallel to a given direction correspond to projections onto a fixed basis; (2) two sets of parallel lines with different directions correspond to projections onto different bases. Since there are $q+1$ directions in $\Omega$, and $q$ parallel lines for each direction, all the $q(q+1)$ basis vectors are thus achieved.

Being an affine space over $\F$, the finite phase-space $\Omega$ carries the action of the associated group  of  translations $V$; this action clearly descends to the set of the affine lines of $\Omega$, and hence to the corresponding stabilizer MUBs described in the previous paragraph. On the other hand, the group $V$ is represented on $\hh$ by means of the Schr\"odinger representation (usually called {\em Pauli} or {\em Weyl-Heisenberg group} in finite dimensions). Then, by their very definition, stabilizer MUBs are {\em covariant} with respect to such a representation.

However, many possible unitarily inequivalent stabilizer MUBs can be defined over the same phase-space $\Omega$. The source of this ambiguity relies entirely on the fact that one has quite many degrees of freedom in the choice of the correspondence between the lines of $\Omega$ and the bases in the MUBs. It has been shown in~\cite{CaScTo16} that each equivalence class of stabilizer MUBs can be identified by means of a suitable multiplier of $V$, called a {\em Weyl multiplier}, which is uniquely {determined by} the class at hand. One can thus access all the relevant information about some given stabilizer MUBs by simply looking at the properties of their associated Weyl multiplier. This is a single function on $V\times V$ compared to the $q(q+1)$ vectors of the MUBs.

One further property usually required from stabilizer MUBs is covariance with respect to additional symmetries of $\Omega$ other than the phase-space translations. This comes from the fact that, being an affine symplectic space, the phase-space $\Omega$ also carries an action of the symplectic group ${\rm SL}(2,\F)$ and its subgroups. Not all stabilizer MUBs are covariant with respect to such an extended action, but only some very restricted classes. In particular, if the field $\F$ has even characteristic, stabilizer MUBs that are covariant with respect to the full group ${\rm SL}(2,\F)$ {\em do not exist at all} \cite{CaScTo16,Zhu15tris}.

However, covariance with respect to certain subgroups of ${\rm SL}(2,\F)$ is often a very important requirement, which is at the basis of many recent applications to quantum error-correcting codes \cite{DeDM03,GrRoBe03,Sc04}, secure quantum key distributions \cite{Chau05}, entropic uncertainty relations \cite{SuWo07,Su07}, MUB-balanced states \cite{ApBeDa15}, sharply covariant MUBs \cite{Zhu15,Zhu15quater} and unitary designs \cite{Zhu15bis,Zhu15quinquies}. Hence, in the even characteristic case, it is natural to look for all possible subgroups of ${\rm SL}(2,\F)$ admitting covariant stabilizer MUBs.

In this paper, we solve this problem, and show that  maximal  covariance subgroups are divided into {\em two} disjoint conjugacy classes, which are the finite analogues of the {\em maximal split} and {\em maximal nonsplit} toruses of ${\rm SL}(2,\R)$. As in the real case, these two kinds of groups have essentially different actions on the affine lines of $\Omega$, and, correspondingly, on their respective covariant stabilizer MUBs. Indeed, while a split torus permutes the lines preserving two fixed directions, a nonsplit one cycles all the directions,  acting freely on them. On the MUB side, this means that only maximal nonsplit toruses have a transitive action on the set of bases, and thus are the most feasible groups for applications.

The paper is organized as follows. In Section \ref{sec:recall} we recall the essential facts about finite phase-spaces, covariant MUBs and the relation between stabilizer MUBs and Weyl multipliers. In Section \ref{sec:main}, we review the classification of all subgroups of ${\rm SL}(2,\F)$ given in~\cite{Moore04,Wiman899,Dickson58}, and search among them {for those} admitting covariant stabilizer MUBs in even characteristic. Section \ref{sec:cyclic} gives an explicit picture of such subgroups, and it shows that they are either the split or nonsplit toruses in ${\rm SL}(2,\F)$. {The} paper concludes in Section \ref{sec:expl} providing an explicit construction of some maximally covariant stabilizer MUBs in even characteristic. {More precisely, we describe a family of $q$ inequivalent such MUBs, thus proving in particular} that maximally covariant MUBs are not unique. This points out a basic difference with the odd characteristic case, where a unique equivalence class of maximally covariant stabilizer MUBs  is known to exist.

%%%%%%%%%%%%%%%%%%%%%%%%%%%%%%%
\section{Covariant quadrature systems and Weyl multipliers}\label{sec:recall}
%%%%%%%%%%%%%%%%%%%%%%%%%%%%%%%

The present section is a brief exposition of the main facts of~\cite{CaScTo16} that will be needed in the following. We refer to Lang's book \cite{LanAlg} for further details on finite fields and Galois theory.

Throughout the paper, $\F$ is a finite field with characteristic $p$. This implies that $|\F|=p^n$ for some positive integer $n$, where we denote by $|\cdot|$ the cardinality of a set. Moreover, $\F$ is an $n$-dimensional vector space over its cyclic subfield $\Z_p$. In this section, the characteristic $p$ may be either even or odd. However, our main results in Sections \ref{sec:main}--\ref{sec:expl} will focus on the case $p=2$.

The {\em trace} of $\F$ is the $\Z_p$-linear functional ${\rm Tr}:\F\to\Z_p$ with $\Tr{\alpha} = \sum_{k=0}^{n-1} \alpha^{p^k}$. We let $\omega_p$ be any $p$-root of unity in the complex field $\C$, and assume $\omega_p$ to be fixed throughout the paper. Note that $\omega_p^{\Tr{\alpha}}$ is a well defined quantity for all $\alpha\in\F$, and exactly $p-1$ possible choices are available for $\omega_p$.

\subsection{Finite phase-space}

In the following, the couple $(\Omega,V)$ {is always} a $2$-dimensional affine space over the field $\F$, that is,
\begin{enumerate}[-]
\item $V$ is a $2$-dimensional vector space over $\F$;
\item $\Omega$ is a set carrying an action of the additive abelian group $V$;
\item the action of $V$ on $\Omega$ is free and transitive.
\end{enumerate}
The translate of an element $x\in\Omega$ by means of a vector $\vv\in V$ is denoted by $x+\vv$. Clearly, $|\Omega|=|V|=|\F|^2$.

We let $\ss$ be the {\em directions} of $\Omega$, that is, the set of $1$-dimensional subspaces of $V$
\begin{equation*}
\ss = \{ D\subset V \mid D = \{\alpha\vd\mid\alpha\in\F\} \text{ for some nonzero $\vd\in V$} \} \, .
\end{equation*}
If $x\in\Omega$, the {\em affine line} (or simply {\em line}) passing through $x$ and parallel to the direction $D\in\ss$ is the subset $x+D = \{x+\vd \mid \vd\in D\}$. There are $|\ss| = |\F|+1$ directions in $\Omega$, hence $|\F|+1$ different lines passing through $x$. Moreover, for a fixed direction $D\in\ss$ there are $|\F|$ disjoint lines parallel to $D$, which form a partition $\Af{D}{\Omega}$ of $\Omega$. The set $\Aff{\Omega} = \bigcup_{D\in\ss} \Af{D}{\Omega}$ is the collection of all the lines of $\Omega$; its cardinality is $|\Aff{\Omega}| = |\F|(|\F|+1)$.

\subsection{Quadrature systems}

Suppose $\hh$ is a finite dimensional Hilbert space with prime-power dimension $\dim\hh = p^n$. A standard way to describe maximal sets of $p^n + 1$ MUBs in $\hh$ is to take the field $\F$ with $|\F| \equiv p^n$ elements, and label each vector of the maximal MUBs with a line of $\Omega$, in such a way that
\begin{enumerate}[-]
\item the $|\F|$ vectors in the same basis correspond to lines parallel to a fixed direction;
\item different bases of the $|\F|+1$ MUBs correspond to different directions.
\end{enumerate}
Changing the labelings of the same MUBs clearly amounts to permuting the bases and the vectors within them. We remark that, in our approach, we regard MUBs with different labelings as {\em essentially distinct}. Anyway, we will not take care of irrelevant phase factors occurring in the vectors of the bases. For this purpose, the most convenient definition of MUBs is in terms of their associated rank-$1$ orthogonal projections as follows.

\begin{definition}\label{def:quad}
A {\em quadrature system} (or simply {\em quadratures}) for the $2$-dimensional affine space $(\Omega,V)$ over $\F$ and acting on the $|\F|$-dimensional Hilbert space $\hh$ is a map $\Po:\Aff{\Omega}\to\lh$, where $\lh$ is the set of the linear operators on $\hh$, such that
\begin{enumerate}[(i)]
\item $\Po(\lf)$ is a rank-$1$ orthogonal projection for all $\lf\in\Aff{\Omega}$;
\item for all $D\in\ss$,
\begin{equation*}
\sum_{\lf\in\Af{D}{\Omega}} \Po(\lf) = \id \,,
\end{equation*}
$\id\in\lh$ being the identity operator of $\hh$;
\item for all $D_1,D_2\in\ss$ with $D_1\neq D_2$,
\begin{equation*}
\trq{\Po(\lf_1)\Po(\lf_2)} = \frac{1}{|\F|} \qquad \text{if $\lf_1 \in \Af{D_1}{\Omega}$ and $\lf_2 \in \Af{D_2}{\Omega}$} \,,
\end{equation*}
where $\trq{\cdot}$ denotes the Hilbert space trace.
\end{enumerate}
\end{definition}

If $\Po$ is a quadrature system for the affine space $(\Omega,V)$, its restriction $\Po_D = \left.\Po\right|_{\Af{D}{\Omega}}$ is a spectral map projecting onto an orthogonal basis, and the spectral maps $\Po_{D_1}$ and $\Po_{D_2}$ project onto two MUBs if $D_1\neq D_2$. A quadrature system thus associates the $|\F|+1$ directions of $\Omega$ with a maximal set of MUBs in $\hh$.

We will regard two unitarily conjugate quadrature systems as essentially the same object. That is, if $\Po_1$ and $\Po_2$ are two quadratures for the same affine space $(\Omega,V)$, acting on possibly different Hilbert spaces $\hh_1$ and $\hh_2$, we say that $\Po_1$ and $\Po_2$ are {\em equivalent} if there is a unitary operator $U:\hh_1\to\hh_2$ such that
\begin{equation}\label{eq:def_equiv}
\Po_2(\lf) = U\Po_1(\lf)U^* \qquad \forall\lf\in\Aff{\Omega} \,.
\end{equation}

\subsection{Symmetries}

The natural symmetry group of the affine space $(\Omega,V)$ is the {\em affine group} ${\rm GL}(V)\rtimes V$, which is the semidirect product of the group ${\rm GL}(V)$ of all invertible $\F$-linear maps of $V$ with the translation group $V$ itself (where $V$ is the normal factor). The action of ${\rm GL}(V)\rtimes V$ on $\Omega$ is the extension of the action of $V$ by translations; it depends on the choice of an {\em origin} $o\in\Omega$, and, once $o$ is fixed, it is given by
\begin{equation*}
(A,\vv) \cdot x = o + A(\vu_{o,x}+\vv) \qquad \forall x\in\Omega,\, (A,\vv)\in {\rm GL}(V)\rtimes V \,,
\end{equation*}
where \(\vu_{o,x}\) is the unique vector such that \(x=o+\vu_{o,x}\).
By means of this formula, we can also define an action of ${\rm GL}(V)\rtimes V$ on $\Aff{\Omega}$, that is,
$$
(A,\vv) \cdot (x+D) = (A,\vv)\cdot x + AD \qquad \forall x+D\in\Aff{\Omega},\, (A,\vv)\in {\rm GL}(V)\rtimes V \,.
$$
{\em Covariance} of a quadrature system is then understood with respect to the latter group action.

\begin{definition}\label{def:VcovarP}
Let $G\subseteq {\rm GL}(V)\rtimes V$ be any subgroup. A quadrature system $\Po$ for the affine space $(\Omega,V)$ acting on the Hilbert space $\hh$ is {\em $G$-covariant} if there exists a unitary projective representation $U$ of $G$ on $\hh$ such that
\begin{equation}\label{eq:defU}
\Po(g\cdot\lf) = U(g)\Po(\lf)U(g)^* \qquad \forall\lf\in\Aff{\Omega},\, g\in G \,.
\end{equation}
\end{definition}

The choice of the unitary operator $U(g)$ in \eqref{eq:defU} is unique up to multiplication by an arbitrary phase factor depending on $g$ (see~\cite[Proposition 3.3]{CaScTo16}); this explains the necessity of dealing with projective representations. We denote by $\qq_G(\Omega)$ the set of all $G$-covariant quadrature systems for the affine space $(\Omega,V)$. If $G\equiv V$ is the group of  phase space translations, a $V$-covariant quadrature system projects on a set of stabilizer MUBs {(or states, codes)} in the terminology of~\cite{DeDM03,GrRoBe03,Sc04,Zhu15,Zhu15quater,Zhu15bis}. Quite many different covariant quadrature systems are then known to exist in this case \cite{GiHoWo04,CaScTo16}. The essential point is that, enlarging the covariance group $G$ to  include elements of ${\rm GL}(V)$, it may happen that the set $\qq_G(\Omega)$ becomes empty.

It is known that in characteristic $p\neq 2$ there is a unique maximal subgruop $G_0\subseteq {\rm GL}(V)$ making the set $\qq_{G_0\rtimes V}(\Omega)$ nonempty, that is, the group $G_0=\SL$ of unit determinant elements in ${\rm GL}(V)$ (see~\cite[Appendix B]{GiHoWo04}). In characteristic $p=2$, however, we have $\qq_{\SL\rtimes V}(\Omega) = \emptyset$ by~\cite[Theorem 7.5]{CaScTo16}, and the problem of finding all the subgroups $G_0\subset {\rm GL}(V)$ admitting $(G_0\rtimes V)$-covariant quadrature systems is open up to now. The objective of the present paper is to solve this question, and thus completely determine the set
\begin{equation}\label{eq:defgg}
\gg = \{G_0\subset {\rm GL}(V) \mid \text{$G_0$ is a subgroup and $\qq_{G_0\rtimes V}(\Omega)\neq\emptyset$}\}
\end{equation}
in even characteristic. Note that also in this case any $G_0\in\gg$ must be a subgroup of $\SL$ by~\cite[Proposition 7.1]{CaScTo16}. Moreover, the set $\gg$ is nontrivial, since by Theorem 8.4 of the same reference the {\em nonsplit toruses} of $\SL$ are elements of $\gg$. The contribution of the present paper is  to show that nonsplit toruses actually do not exhaust the set $\gg$, but they are just `one half' of it.

\subsection{$V$-covariant quadratures and Weyl multipliers}

Our approach to the problem of determining the set $\gg$ relies on the classification of $V$-covariant quadrature systems by means of suitably defined associated multipliers, a topic that was extensively exposed in~\cite{CaScTo16}. Here we briefly recall the essential points.

\begin{theorem}{\cite[Propositions 4.2 and 4.6]{CaScTo16}}\label{teo:Vcov1}
Suppose $\Po$ is a $V$-covariant quadrature system for the affine space $(\Omega,V)$ acting on the Hilbert space $\hh$. Let $o\in\Omega$ be any point. Then there exists a unique projective unitary representation $W_o$ of $V$ on $\hh$ such that
\begin{enumerate}[(W.1)]
\item $W_o(\vv)\Po(\lf)W_o(\vv)^* = \Po(\lf+\vv)$ for all $\lf\in\Aff{\Omega}$ and $\vv\in V$;
\item $W_o(\vd)\Po(o+D) = \Po(o+D)$ for all $D\in\ss$ and $\vd\in D$.
\end{enumerate}
The multiplier $m$ of the projective representation $W_o$ does not depend on the choice of the point $o$, and it satisfies the two relations
\begin{enumerate}[(M.1)]
\item for any $D\in\ss$, $m(\vd_1,\vd_2) = 1$ for all $\vd_1,\vd_2\in D$;
\item $\overline{m(\vu,\vv)} m(\vv,\vu) = \omega_p^{\Tr{\sym{\vu}{\vv}}}$ for all $\vu,\vv\in V$, where $S$ is a symplectic form\footnote{A {\em symplectic form} is a nonzero $\F$-bilinear map \(S:V\times V\to\F\) such that $\sym{\vv}{\vv} = 0$ for all $\vv\in V$. Such a form is symmetric in characteristic $p = 2$ and antisymmetric in characteristic $p\neq 2$. Moreover, since $\F$ is $2$-dimensional, $S$ is uniquely determined up to multiplication by a scalar in $\F_*=\F\setminus\{0\}$.} on $V$ which is uniquely determined.
\end{enumerate}
\end{theorem}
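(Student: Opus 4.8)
The plan is to construct $W_o$ by correcting the phases of an arbitrary covariant representation, to get uniqueness from the triviality of the commutant of a complete set of MUBs, and then to read off (M.1)--(M.2) and the independence of $o$ from the explicit way $W_o$ acts on the reference projections. First, for existence: Definition \ref{def:VcovarP} (with $G=V$) already furnishes \emph{some} projective unitary representation $U$ of $V$ with $U(\vv)\Po(\lf)U(\vv)^*=\Po(\lf+\vv)$, i.e.\ property (W.1). To upgrade $U$ so that (W.2) also holds, I would use that every nonzero $\vv$ lies in a unique direction $D_\vv\in\ss$, so the line $o+D_\vv$ is invariant under translation by $\vv$; hence by (W.1) the operator $U(\vv)$ commutes with the rank-one projection $\Po(o+D_\vv)$ and acts on its one-dimensional range as a phase $\lam(\vv)\in\T$. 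Setting $W_o(\vv):=\overline{\lam(\vv)}\,U(\vv)$ and $W_o(\vnull)=\id$ leaves (W.1) untouched, since the scalar cancels in the conjugation, while by construction $W_o(\vv)$ now fixes the unit vector spanning $\Po(o+D_\vv)$, which is exactly (W.2). Being a phase multiple of $U$, the map $W_o$ is again projective; call its multiplier $m$.

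For uniqueness, suppose $W_o$ and $W_o'$ both satisfy (W.1). Then for each $\vv$ the operator $W_o'(\vv)^*W_o(\vv)$ commutes with every $\Po(\lf)$. Since the $|\F|(|\F|+1)$ projections of a complete set of MUBs span all of $\lh$, this commutant is trivial, so $W_o(\vv)=\mu(\vv)W_o'(\vv)$ for some phase $\mu(\vv)$. Evaluating (W.2) for both representations on the vector spanning $\Po(o+D_\vv)$ then forces $\mu(\vv)=1$, whence $W_o=W_o'$.

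For the multiplier, write $\Po(o+D)=\kb{\psi_D}{\psi_D}$, so that (W.2) reads $W_o(\vd)\psi_D=\psi_D$ for $\vd\in D$. As $\vd_1,\vd_2\in D$ give $\vd_1+\vd_2\in D$, applying $W_o(\vd_1)W_o(\vd_2)=m(\vd_1,\vd_2)W_o(\vd_1+\vd_2)$ to $\psi_D$ yields $m(\vd_1,\vd_2)=1$, which is (M.1). The independence of $o$ is cleanest to settle \emph{before} (M.2): writing $o'=o+\vw$, one checks directly that $\vv\mapsto W_o(\vw)W_o(\vv)W_o(\vw)^*$ satisfies (W.1) and (W.2) relative to $o'$, so by the uniqueness just proved it equals $W_{o'}$; since conjugation by the single operator $W_o(\vw)$ does not alter the composition law, $W_{o'}$ carries the \emph{same} multiplier $m$.

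The crux is (M.2). The commutator $B(\vu,\vv):=\overline{m(\vu,\vv)}\,m(\vv,\vu)$ satisfies $W_o(\vv)W_o(\vu)=B(\vu,\vv)\,W_o(\vu)W_o(\vv)$, and it is a standard consequence of associativity that $B$ is a $\T$-valued bicharacter of the additive group $V$, which (M.1) renders trivial on every direction. Using nondegeneracy of the trace pairing, each character $\vv\mapsto B(\vu,\vv)$ equals $\omega_p^{\Tr{\ell_\vu(\vv)}}$ for a unique $\F$-linear functional $\ell_\vu$, and additivity in $\vu$ makes $S(\vu,\vv):=\ell_\vu(\vv)$ biadditive and $\F$-linear in its second slot; triviality on directions gives $S(\vu,\vu)=0$, so $S$ is antisymmetric, and being $\F$-linear in one slot it is then $\F$-bilinear — a symplectic form (or identically zero), uniquely determined because $\Tr$ is nondegenerate. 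The delicate point, which I expect to demand the most care, is ruling out $S\equiv 0$: were $B$ trivial, all $W_o(\vv)$ would commute and hence be simultaneously diagonalizable, forcing the $|\F|+1$ bases to be jointly diagonalized and contradicting their mutual unbiasedness (iii); tying this abstract cocycle computation back to the defining complementarity of the MUBs is the genuinely nontrivial step, while everything preceding it is bookkeeping with phases and the commutant lemma.
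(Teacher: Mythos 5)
This theorem is imported by the paper from \cite[Propositions 4.2 and 4.6]{CaScTo16} and is stated without proof, so there is no in-paper argument to compare yours against; I can only assess your reconstruction on its own terms. Most of it is sound: the phase-correction trick for existence (each nonzero $\vv$ lies in a unique direction $D_\vv$, so $U(\vv)$ commutes with the rank-one $\Po(o+D_\vv)$ and can be renormalized to fix the spanning vector) is correct; uniqueness via the fact that the $|\F|(|\F|+1)$ projections of a complete set of MUBs span $\lh$ is correct; and (M.1), the conjugation argument for independence of $o$, and the identification of the commutation bicharacter $B$ with $\omega_p^{\Tr{S(\cdot,\cdot)}}$ via nondegeneracy of the trace pairing all go through. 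One point of order: to get $S(\vu,\vu)=0$ you must use the triviality of $B$ on all of $D\times D$ (not merely at $(\vu,\vu)$, which only yields $\Tr{S(\vu,\vu)}=0$) together with the already-established $\F$-linearity in the second slot; once $S(\vu,\vu)=0$ is in hand, antisymmetry and then full $\F$-bilinearity follow in the order you indicate.

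The one step that does not work as literally stated is the exclusion of $S\equiv 0$, which you yourself flag as the delicate point. Simultaneous diagonalizability of the commuting family $\{W_o(\vv)\}$ does not by itself force the $|\F|+1$ bases to be jointly diagonalized: commuting unitaries can perfectly well permute a basis transitively (a single cyclic shift already does), and nothing in your argument ties a common eigenbasis of the $W_o$'s to the ranges of the $\Po(\lf)$'s. The mechanism that actually closes the argument is different. For $\vd\in D$ one has $\lf+\vd=\lf$ for every $\lf\in\Af{D}{\Omega}$, so by (W.1) $W_o(\vd)$ is diagonal in the orthonormal basis $\{\psi_\lf \mid \lf\in\Af{D}{\Omega}\}$, say $W_o(\vd)\psi_\lf=\chi_\lf(\vd)\psi_\lf$; since (W.1) also gives that $W_o(\vv)\psi_\lf$ is proportional to $\psi_{\lf+\vv}$, commutativity of $W_o(\vd)$ with every $W_o(\vv)$ forces $\chi_{\lf+\vv}=\chi_\lf$ for all $\vv$, hence all the $\chi_\lf$ coincide with $\chi_{o+D}=1$ by (W.2), i.e.\ $W_o(\vd)=\id$. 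As every nonzero vector lies in some direction, this yields $W_o\equiv\id$, and then (W.1) gives $\Po(\lf)=\Po(\lf+\vv)$ for all $\lf$ and $\vv$, so all $|\F|$ rank-one projections of a fixed basis coincide --- impossible since they sum to $\id$ in dimension $|\F|\geq 2$. With this replacement (which still only uses (W.1), (W.2) and the definition of a quadrature system), your outline becomes a complete proof.
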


We recall that the {\em multiplier} of $W_o$ is the function $m:V\times V\to \{z\in\C\mid z\overline{z}=1\}$ such that
$$
W_o(\vu+\vv) = m(\vu,\vv)W_o(\vu)W_o(\vv) \qquad \forall \vu,\vv\in V \,.
$$
It satisfies the cocycle relation
$$
m(\vu+\vv,\vw) m(\vu,\vv) = m(\vu,\vv+\vw) m(\vv,\vw) \qquad \forall \vu,\vv,\vw\in V \,.
$$
By items (M.1) and (M.2), the projective representation $W_o$ has the following two additional properties:
\begin{enumerate}[(W.1)]\setcounter{enumi}{2}
\item the restriction $\left.W_o\right|_D$ is an ordinary (i.e., nonprojective) representation of $D$ for all $D\in\ss$;
\item $W_o$ satisfies the commutation relation
$$
W_o(\vu)W_o(\vv) = \omega_p^{\Tr{\sym{\vu}{\vv}}} W_o(\vv)W_o(\vu) \qquad \forall\vu,\vv\in V \,.
$$
\end{enumerate}

A projective unitary representation of $V$ with properties (W.3) and (W.4) is called a {\em Weyl system} for the symplectic space $(V,S)$. The Weyl system $W_o$ satisfying {the additional conditions (W.1) and (W.2) is then said to be} {\em associated} with the $V$-covariant quadratures $\Po$ and {\em centered} at $o$. Accordingly, any multiplier $m$ of the additive abelian group $V$ which satisfies items (M.1) and (M.2) of Theorem \ref{teo:Vcov1} is called a {\em Weyl multiplier} for the symplectic space $(V,S)$. Theorem \ref{teo:Vcov1} then asserts that, through any associated centered Weyl system, an element $\Po\in\qq_V(\Omega)$ defines a symplectic form $S$ on $V$ and a Weyl multiplier $m$ for $(V,S)$ in an unambiguous way. We call such $S$ and $m$ the symplectic form and Weyl multiplier {\em associated} with $\Po$. It is easy to check that, if $\Po_1,\Po_2\in\qq_V(\Omega)$ are equivalent in the sense of \eqref{eq:def_equiv}, then the symplectic forms and Weyl multipliers associated with $\Po_1$ and $\Po_2$ are the same. Remarkably, the converse of this fact also holds.

\begin{theorem}{\cite[Theorem 6.3]{CaScTo16}}\label{teo:Vcov2}
Let $S$ be a symplectic form on $V$, and $m$ a Weyl multiplier for $(V,S)$. Then there exists a unique equivalence class $\qq_V(\Omega,S,m)$ of $V$-covariant quadrature systems for $(\Omega,V)$ having $S$ and $m$ as {the} associated form and multiplier.
\end{theorem}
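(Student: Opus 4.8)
The plan is to establish existence by explicitly building one representative of the prospective class from a Weyl system, and then to derive uniqueness from the essential uniqueness of the irreducible Weyl system attached to $m$. The whole argument rests on one representation-theoretic fact, which I would set up first: the commutator pairing $\beta(\vu,\vv)=\overline{m(\vu,\vv)}\,m(\vv,\vu)$ determined by $m$ through (M.2) is nondegenerate. Indeed $\beta(\vu,\vv)=\omega_p^{\Tr{\sym{\vu}{\vv}}}$, and since $S$ is a nondegenerate symplectic form and $\Tr$ is a nonzero $\Z_p$-linear functional, the map $\vv\mapsto\Tr{\sym{\vu}{\vv}}$ is a nontrivial character of $V$ whenever $\vu\neq\vnull$. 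Hence the twisted group algebra of $(V,m)$ is a full matrix algebra of size $|\F|\times|\F|$, and it carries a projective unitary representation $W$ of $V$ on $\hh$ which is irreducible, has multiplier exactly $m$, and is unique up to unitary equivalence because the cocycle is fixed as a function. By (M.1) and (M.2), $W$ is a Weyl system for $(V,S)$ satisfying (W.3) and (W.4).

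Next I would use $W$ to build the quadratures. Fix an origin $o$. Each direction $D$ is a maximal isotropic (Lagrangian) line of $(V,S)$, so by (W.3) the commuting family $\{W(\vd)\mid\vd\in D\}$ is an ordinary representation of $D\cong(\Z_p)^n$; by the finite Stone--von Neumann picture it decomposes $\hh$ into the $|\F|$ joint character eigenspaces, each of dimension one since there are $|\F|$ characters and $\dim\hh=|\F|$. I would define $\Po(o+D)$ to be the projection onto $\{v\in\hh\mid W(\vd)v=v\ \forall\vd\in D\}$, which is rank one, and extend to every line parallel to $D$ by $\Po(o+D+\vv)=W(\vv)\Po(o+D)W(\vv)^*$. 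This is well defined: for $\vd\in D$ the operator $W(\vd)$ commutes with $\Po(o+D)$ and is the identity on its range, and the multiplier phase cancels under conjugation, so the right-hand side depends only on the coset $\vv+D$. Axiom (i) holds by construction; axiom (ii) holds because the $|\F|$ translated projections realize the full spectral resolution of $W|_D$; and the unbiasedness (iii), $\trq{\Po(\lf_1)\Po(\lf_2)}=1/|\F|$ for $D_1\neq D_2$, follows from a direct computation using the commutation relation (W.4) and the irreducibility of $W$, exactly as in the Heisenberg-group proof that eigenbases of transverse Lagrangians are mutually unbiased.

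It then remains to identify the associated data and to settle uniqueness. By construction $W$ satisfies (W.1) (the translation rule) and (W.2) (as $W(\vd)$ is the identity on the range of $\Po(o+D)$), so by the characterization in Theorem \ref{teo:Vcov1} it is precisely the centered Weyl system associated with $\Po$; hence the form and multiplier associated with $\Po$ are $S$ and $m$, which proves existence. For uniqueness, let $\Po_1,\Po_2\in\qq_V(\Omega)$ share the associated data $(S,m)$. Their centered Weyl systems $W_o^{(1)},W_o^{(2)}$ supplied by Theorem \ref{teo:Vcov1} are irreducible projective representations of $V$ with the same multiplier function $m$, hence unitarily equivalent via some $U$ with $W_o^{(2)}(\vv)=U\,W_o^{(1)}(\vv)\,U^*$. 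Property (W.2) characterizes each $\Po_i(o+D)$ as the fixed-point projection of $W_o^{(i)}|_D$, so $U$ intertwines them, and (W.1) together with the translation rule propagates the intertwining to every line, giving $\Po_2(\lf)=U\Po_1(\lf)U^*$. Thus $\Po_1$ and $\Po_2$ are equivalent and the class $\qq_V(\Omega,S,m)$ is unique.

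The main obstacle I anticipate is the representation-theoretic core rather than the bookkeeping: proving that a Weyl system with the prescribed multiplier exists, is irreducible, and is unique up to unitary equivalence, and that the Lagrangian restriction $W|_D$ decomposes $\hh$ into one-dimensional character eigenspaces so that $\Po(o+D)$ is genuinely rank one. Once this finite Stone--von Neumann structure is in place, the well-definedness of the translation extension and the verification of axioms (i)--(iii) are routine, and both existence and uniqueness follow by reading them off through the characterization (W.1)--(W.2) of Theorem \ref{teo:Vcov1}.
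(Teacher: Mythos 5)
Your argument is sound, but note that the paper does not actually prove this statement: Theorem \ref{teo:Vcov2} is imported verbatim from \cite[Theorem 6.3]{CaScTo16}, and the only trace of its proof in the present text is the explicit coordinate construction of Section \ref{subsec:explVcov}, where a basis $\{\phi_\gamma\mid\gamma\in\F\}$ is fixed and $W_o$, $\Po$ are written out via \eqref{eq:WHexpl} and \eqref{eq:quadexpl}. Your route is the basis-free version of exactly that construction: build the (essentially unique) irreducible projective representation with multiplier $m$, take $\Po(o+D)$ to be the fixed-point projection of $W|_D$, translate, and read off uniqueness from Stone--von Neumann rigidity together with the characterization (W.1)--(W.2) of the centered Weyl system. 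What your approach buys is a cleaner uniqueness argument (the paper's coordinate formulas only exhibit one representative per class); what the paper's formulas buy is that all the representation-theoretic input is verified by direct computation. The one step you should tighten is the claim that each character eigenspace of $W|_D$ is one-dimensional: counting $|\F|$ characters against $\dim\hh=|\F|$ does not by itself exclude that some characters occur with multiplicity zero and others with multiplicity greater than one. The standard fix is the observation that conjugation by $W(\vv)$ sends the $\chi$-eigenspace to the $\chi\cdot\omega_p^{\Tr{\sym{\cdot}{\vv}}}$-eigenspace, and that these twists exhaust all characters of $D$ as $\vv$ ranges over $V$ (by nondegeneracy of $S$ and nontriviality of ${\rm Tr}$), so all eigenspaces are unitarily conjugate and hence equidimensional. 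You flag this as part of the ``representation-theoretic core'' to be established, which is the right place to locate it; with that supplied, the proof is complete and consistent with the cited source.
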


For any symplectic form $S$ on $V$, Weyl multipliers $m$ for $(V,S)$ exist by~\cite[Proposition 6.1]{CaScTo16} (see also Section \ref{sec:expl} below for some explicit constructions of $m$). Hence the set $\qq_V(\Omega)\supset\qq_V(\Omega,S,m)$ is always nonempty. But what really matters in the equivalence class of a $V$-covariant quadrature system is its Weyl multiplier and not its associated symplectic form. In fact, it is easy to see that for the same symplectic space $(V,S)$ there exist quite many different Weyl multipliers. In other words, if we write $\qq_V(\Omega,S)$ for the totality of $V$-covariant quadrature systems having $S$ as the associated symplectic form, in the chain of inclusions $\qq_V(\Omega)\supset\qq_V(\Omega,S)\supset\qq_V(\Omega,S,m)$ only the latter set is made of a single equivalence class of quadratures.

\subsection{The explicit form of $V$-covariant quadratures}\label{subsec:explVcov}
We assume that the quadrature system $\Po\in\qq_V(\Omega,S,m)$ is given and acts on the Hilbert space $\hh$. In order to write down $\Po$ explicitly, we need to choose
\begin{enumerate}[-]
\item an origin $o\in\Omega$;
\item a {\em symplectic basis} of $(V,S)$, i.e., an $\F$-linear basis $\{\ve_1,\ve_2\}$ of $V$ such that $S(\ve_1,\ve_2) = 1$;
\item a unit vector $\phi_0\in\hh$ in the range of $\Po(o+\F\ve_2)$, where $\F\ve_2 = \{\alpha\ve_2 \mid \alpha\in\F\}$ is the direction in $V$ along $\ve_2$.
\end{enumerate}
After this preparation, if $W_o$ is the Weyl system associated with $\Po$ and centered at $o$, we set
$$
\phi_\gamma = W_o(\gamma\ve_1)\phi_0 \qquad \forall \gamma\in\F\,.
$$
We then have $\phi_\gamma\in\ran[W_o(\gamma\ve_1)\Po(o+\F\ve_2)] = \ran[\Po(o+\gamma\ve_1+\F\ve_2)]$ by covariance of $\Po$. Since
$\Af{\F\ve_2}{\Omega} = \{o+\gamma\ve_1+\F\ve_2 \mid \gamma\in\F\}$, by properties (i) and (ii) of a quadrature system, the vectors $\{\phi_\gamma\mid\gamma\in\F\}$ form an orthonormal basis of $\hh$. In this basis, the Weyl system $W_o$ is given by
\begin{equation}\label{eq:WHexpl}
W_o(\alpha_1\ve_1+\alpha_2\ve_2)\phi_\gamma = m(\alpha_1\ve_1,\alpha_2\ve_2) \omega_p^{-\Tr{\alpha_2\gamma}} \phi_{\gamma+\alpha_1} \qquad \forall \alpha_1,\alpha_2\in\F \,.
\end{equation}
Indeed,
\begin{align*}
& W_o(\alpha_1\ve_1+\alpha_2\ve_2)\phi_\gamma = m(\alpha_1\ve_1,\alpha_2\ve_2) W_o(\alpha_1\ve_1)W_o(\alpha_2\ve_2)W_o(\gamma\ve_1) \phi_0 \\
& \qquad \qquad = m(\alpha_1\ve_1,\alpha_2\ve_2) \omega_p^{\Tr{\sym{\alpha_2\ve_2}{\gamma\ve_1}}} W_o(\alpha_1\ve_1)W_o(\gamma\ve_1)W_o(\alpha_2\ve_2) \phi_0\\
& \qquad \qquad = m(\alpha_1\ve_1,\alpha_2\ve_2)\omega_p^{-\Tr{\alpha_2\gamma}} \phi_{\gamma+\alpha_1}
\end{align*}
since $W_o(\alpha_2\ve_2) \phi_0 = \phi_0$ {because $W_o$ is centered at $o$ (see item (W.2) of Theorem \ref{teo:Vcov1}).} By~\cite[Proposition 5.2]{CaScTo16}, for all $\vu,\vv\in V$ with $\vu\neq\vnull$,
$$
\Po(o+\vv+\F\vu) = \frac{1}{|\F|} \sum_{\lam\in\F} \omega_p^{\Tr{\sym{\vv}{\lam\vu}}} W_o(\lam\vu) \,,
$$
where $\F\vu$ is the direction along $\vu$, and hence
\begin{equation}\label{eq:quadexpl}
\begin{aligned}
\Po(o+\vv+\F\vu)\phi_\gamma & = \frac{1}{|\F|} \sum_{\lam\in\F} m(\lam\alpha_1\ve_1,\lam\alpha_2\ve_2) \,\omega_p^{\Tr{\lam\left[\alpha_2(\beta_1 - \gamma) - \alpha_1\beta_2\right]}} \phi_{\gamma+\lam\alpha_1} \\
& \qquad \qquad \text{with} \qquad \vu = \alpha_1\ve_1+\alpha_2\ve_2 \qquad \vv = \beta_1\ve_1+\beta_2\ve_2\,.
\end{aligned}
\end{equation}

In the converse direction, if $m$ is any given Weyl multiplier for the symplectic space $(V,S)$, one can pick a $|\F|$-dimensional Hilbert space $\hh$, fix an orthonormal basis $\{\phi_\gamma\mid\gamma\in\F\}$ of $\hh$, and define the maps $W_o$ and $\Po$ as in \eqref{eq:WHexpl} and \eqref{eq:quadexpl}. As $W_o$ and $\Po$ are unitarily equivalent to the maps defined in the previous paragraph, we have $\Po\in\qq_V(\Omega,S,m)$ and $W_o$ is its associated Weyl system centered at $o$.

\subsection{More symmetries besides translations}

We already noticed that $G_0\subseteq\SL$ is a necessary condition for the set $\qq_{G_0\rtimes V}(\Omega)$ to be nonempty (see~\cite[Proposition 7.1]{CaScTo16}). In order to find a sufficient condition, we need the notion of {\em $G_0$-invariance} for a Weyl multiplier $m$, that is,
$$
m(A\vu,A\vv) = m(\vu,\vv) \qquad \forall \vu,\vv\in V,\, A\in G_0 \,.
$$
The existence of $G_0$-invariant Weyl multipliers is equivalent to the set $\qq_{G_0\rtimes V}(\Omega)$ being nonempty. Indeed, we have the following fact.

\begin{proposition}{\cite[Proposition 7.2]{CaScTo16}} \label{prop:Gquad-->Gmult}
Let $G_0\subseteq\SL$ be any subgroup. A quadrature system $\Po\in\qq_V(\Omega)$ is $(G_0\rtimes V)$-covariant if and only if its associated Weyl multiplier is $G_0$-invariant.
\end{proposition}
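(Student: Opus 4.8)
The plan is to reduce both implications to a single intertwining identity linking the unitaries that implement $G_0$ to the centered Weyl system. Fix an origin $o\in\Omega$, let $W_o$ be the Weyl system associated with $\Po$ and centered at $o$ (Theorem \ref{teo:Vcov1}), and let $S$ and $m$ be the associated symplectic form and Weyl multiplier. The observation used in both directions is that for $A\in G_0\subseteq\SL$ one has $\sym{A\vu}{A\vv}=\det(A)\,\sym{\vu}{\vv}=\sym{\vu}{\vv}$ (since $V$ is $2$-dimensional and $\det A=1$); hence $\vv\mapsto W_o(A\vv)$ is again a Weyl system for the \emph{same} symplectic space $(V,S)$, and a one-line computation of its multiplier gives $m^A(\vu,\vv):=m(A\vu,A\vv)$. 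Thus $G_0$-invariance of $m$ is exactly the statement that $W_o\circ A$ and $W_o$ share the multiplier $m$ for every $A\in G_0$.

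For the implication ``$(G_0\rtimes V)$-covariant $\Rightarrow$ $m$ is $G_0$-invariant'', suppose $U$ is a projective representation of $G_0\rtimes V$ realizing \eqref{eq:defU}. Since $A\in G_0$ fixes $o$ and sends the line $\lf$ to $A\cdot\lf$, I would set $W'(\vv):=U(A)\,W_o(A^{-1}\vv)\,U(A)^*$ and check that $W'$ satisfies the defining properties (W.1) and (W.2) of Theorem \ref{teo:Vcov1} relative to $\Po$. Both verifications use only the covariance relation $U(A)^*\Po(\lf)U(A)=\Po(A^{-1}\cdot\lf)$, the properties (W.1)--(W.2) of $W_o$, and the compatibility $A^{-1}\cdot(\lf+\vv)=(A^{-1}\cdot\lf)+A^{-1}\vv$ of the affine action; the arbitrary phase of the projective $U(A)$ cancels because $U(A)$ enters only through conjugation. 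By the uniqueness clause of Theorem \ref{teo:Vcov1}, $W'=W_o$, that is $U(A)\,W_o(\vv)\,U(A)^*=W_o(A\vv)$ for all $\vv$. Conjugating $W_o(\vu+\vv)=m(\vu,\vv)W_o(\vu)W_o(\vv)$ by $U(A)$ and comparing with the same identity written for $A\vu,A\vv$ then forces $m(A\vu,A\vv)=m(\vu,\vv)$.

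For the converse, assume $m$ is $G_0$-invariant. By the first paragraph, for each $A\in G_0$ the Weyl systems $W_o\circ A$ and $W_o$ act irreducibly on $\hh$ with the \emph{same} nondegenerate multiplier $m$; by the uniqueness of the irreducible Weyl system attached to $(V,S,m)$ (the finite Stone--von Neumann theorem) there is a unitary $U(A)$, unique up to a phase, with $U(A)\,W_o(\vv)\,U(A)^*=W_o(A\vv)$. Inserting this into the expansion $\Po(o+\vv+\F\vu)=\tfrac{1}{|\F|}\sum_{\lam\in\F}\omega_p^{\Tr{\sym{\vv}{\lam\vu}}}W_o(\lam\vu)$ of Section \ref{subsec:explVcov} and using the $A$-invariance of $S$ gives $U(A)\Po(o+\vv+\F\vu)U(A)^*=\Po(o+A\vv+\F A\vu)=\Po(A\cdot(o+\vv+\F\vu))$, which is precisely $A$-covariance since $A$ fixes $o$. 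Setting $U(A,\vv):=U(A)W_o(\vv)$ and combining with property (W.1) of $W_o$ yields \eqref{eq:defU} for every $(A,\vv)\in G_0\rtimes V$; that $(A,\vv)\mapsto U(A,\vv)$ is a genuine projective representation follows from the uniqueness-up-to-phase of the implementing unitaries recalled after Definition \ref{def:VcovarP}. Hence $\Po\in\qq_{G_0\rtimes V}(\Omega)$.

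The step I expect to be the main obstacle is the existence of $U(A)$ in the converse: passing from the coincidence of multipliers to an intertwiner requires the Stone--von Neumann uniqueness for the finite Weyl system, namely that the irreducible representation determined by $(V,S,m)$ is unique up to unitary equivalence. Once that and the bookkeeping of the affine action in $G_0\rtimes V$ are settled, the covariance check through the explicit formula for $\Po$ is routine.
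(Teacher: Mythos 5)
This proposition is imported verbatim from \cite{CaScTo16} and the present paper contains no proof of it, so there is no internal argument to compare against; judged on its own, your proof is correct and follows the natural route. The forward direction is clean: $W'(\vv)=U(A)W_o(A^{-1}\vv)U(A)^*$ is a projective unitary representation of $V$ satisfying (W.1) and (W.2), so the uniqueness clause of Theorem \ref{teo:Vcov1} forces $W'=W_o$, and comparing multipliers gives $m(A\vu,A\vv)=m(\vu,\vv)$; the phase ambiguity of $U(A)$ indeed cancels because it enters only through conjugation. Two points deserve a little more care than you give them. First, for $A\in G_0$ to fix $o$ you must take $o$ to be the distinguished origin used to define the action of ${\rm GL}(V)\rtimes V$ on $\Omega$; this costs nothing since Theorem \ref{teo:Vcov1} guarantees the associated multiplier is independent of the choice of $o$, but it should be said. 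Second, the converse genuinely rests on the finite Stone--von Neumann theorem, exactly as you flag: one needs that a Weyl system for $(V,S)$ with $S$ nondegenerate is irreducible (its dimension $|\F|$ equals $\sqrt{|V|}$) and that any two irreducible projective representations of $V$ with the same nondegenerate multiplier are unitarily equivalent. This is standard and available in \cite{CaScTo16}, but it is an external input not restated in this paper, so in a self-contained write-up it must be cited explicitly. Your passage from the intertwiners $U(A)$ to a projective representation of $G_0\rtimes V$ via $U(A,\vv)=U(A)W_o(\vv)$, justified by the uniqueness-up-to-phase of implementing unitaries recalled after Definition \ref{def:VcovarP}, is also sound, as is the covariance check on lines through the expansion of $\Po(o+\vv+\F\vu)$ in terms of $W_o$ together with $\sym{A\vu}{A\vv}=\sym{\vu}{\vv}$.
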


As a consequence, the set $\gg$ of \eqref{eq:defgg} coincides with
$$
\gg = \{G_0\subseteq\SL\mid \text{there exist $G_0$-invariant Weyl multipliers}\}\,.
$$

In odd characteristic, the multiplier
\begin{equation}\label{eq:WHmult_odd}
m(\vu,\vv) = \omega_p^{\Tr{S(2^{-1}\vv,\vu)}} \,,
\end{equation}
is a Weyl multiplier for the symplectic space $(V,S)$ which is invariant with respect to the whole group $\SL$ (see~\cite[Proposition 7.4]{CaScTo16}); therefore, $\gg$ is actually the set of all the subgroups of $\SL$. However, in even characteristic such an $m$ can not be defined, and we need to look for subgroups $G_0\subset\SL$ admitting $G_0$-invariant Weyl multipliers case by case. This is done in the next section, and the detailed description of the set $\gg$ in characteristic $p=2$ is provided in Section \ref{sec:cyclic} below (see Theorem \ref{theo:main}).

%%%%%%%%%%%%%%%%%%%%%%%%%%%%%%%
\section{All covariant quadrature systems in characteristic $2$}\label{sec:main}
%%%%%%%%%%%%%%%%%%%%%%%%%%%%%%%

From now on, we focus on characteristic $p=2$. The following is then the key step towards our characterization of the set $\gg$ in this case.

\begin{lemma}\label{lem:main}
Suppose $\F$ has characteristic $p=2$. Then $\qq_{G_0\rtimes V}(\Omega) = \emptyset$ for all subgroups $G_0\subseteq\SL$ such that $|G_0|$ is even.
\end{lemma}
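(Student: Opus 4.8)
The goal is to show that in characteristic $2$, no subgroup $G_0\subseteq\SL$ of even order admits a $G_0$-invariant Weyl multiplier, so that $\qq_{G_0\rtimes V}(\Omega)=\emptyset$ by Proposition \ref{prop:Gquad-->Gmult}. By Cauchy's theorem, a group of even order contains an element $A$ of order $2$, i.e.\ an involution $A^2=\iden$ with $A\neq\iden$ in $\SL$. The strategy is to reduce the whole problem to a single such involution: it suffices to prove that the cyclic subgroup $\langle A\rangle$ of order $2$ admits \emph{no} invariant Weyl multiplier, since any $G_0$-invariant multiplier would in particular be $\langle A\rangle$-invariant. So I would fix an involution $A\in\SL$ and derive a contradiction from the assumed existence of an $A$-invariant Weyl multiplier $m$.

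**Key steps.**
First I would pin down the structure of an involution in $\SL$ over a field of characteristic $2$. Since $A^2=\iden$ means $(A-\iden)^2=A^2-\iden=0$ (using $2=0$), the map $N:=A-\iden$ is nilpotent with $N^2=0$, and because $A\neq\iden$ we have $N\neq 0$, so $\rank N=1$. Thus $A$ is a transvection: there is a nonzero vector $\ve$ spanning $\ran N=\ker N$ and a linear functional realizing $N\vv=\lambda(\vv)\ve$, with $A\vv=\vv+\lambda(\vv)\ve$. Because $\det A=1$ automatically for such a unipotent map, every order-$2$ element is of this transvection form. I would then choose a convenient symplectic basis adapted to $A$, say with $\ve$ as the first basis vector $\ve_1$, so that $A\ve_1=\ve_1$ and $A\ve_2=\ve_2+c\,\ve_1$ for some scalar $c\in\F$ (and $c\neq 0$ since $A\neq\iden$).

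**The contradiction.**
Now I would exploit properties (M.1) and (M.2) of the Weyl multiplier against $A$-invariance. The plan is to compute the antisymmetric combination $\overline{m(\vu,\vv)}\,m(\vv,\vu)=\omega_p^{\Tr{S(\vu,\vv)}}$ from (M.2) for a well-chosen pair $\vu,\vv$, and compare it with the same expression evaluated at $A\vu,A\vv$. On one hand, $A$-invariance forces $\overline{m(A\vu,A\vv)}\,m(A\vv,A\vu)=\overline{m(\vu,\vv)}\,m(\vv,\vu)$, hence $\omega_p^{\Tr{S(A\vu,A\vv)}}=\omega_p^{\Tr{S(\vu,\vv)}}$. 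On the other hand, an involution need not preserve $S$, and the discrepancy $S(A\vu,A\vv)-S(\vu,\vv)$ is exactly what I want to make nonzero. Testing $\vu=\ve_1$, $\vv=\ve_2$ gives $S(A\ve_1,A\ve_2)=S(\ve_1,\ve_2+c\ve_1)=S(\ve_1,\ve_2)=1$, which unfortunately matches, so the determinant-one condition protects the symplectic form on the whole space. This tells me the contradiction cannot come from (M.2) alone; it must come from combining the \emph{diagonal} normalization (M.1) along the direction $D=\F\ve_1$ with invariance. Concretely, invariance relates $m$ on the line through $\ve_2$ to $m$ on the line through $A\ve_2=\ve_2+c\ve_1$, and using the cocycle relation together with (M.1) on $\F\ve_1$ I would express both in terms of a single scalar and force $\omega_p^{\Tr{(\text{something}\cdot c)}}=1$ for all values of a free parameter, which in characteristic $2$ fails because the trace form $\Tr{}$ is nondegenerate and $c\neq 0$.

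**Main obstacle.**
The hard part will be isolating exactly which consequence of $A$-invariance is violated, since the naive $S$-based obstruction vanishes by $\det A=1$. I expect the real mechanism to be that an $A$-invariant multiplier must be constant along $A$-orbits of pairs, while (M.1) rigidly fixes $m$ to be \emph{trivial} along each fixed direction; an involution moving one direction into another then over-determines $m$, and the cocycle relation propagates this into a contradiction with the nondegeneracy of $\Tr{}$ over $\Z_2$. Making this propagation precise — choosing the right pairs so that the cocycle identity closes up into a single inconsistent scalar equation — is the delicate bookkeeping step, and it is where the characteristic-$2$ hypothesis (specifically that $\Tr{c\beta}=0$ for all $\beta$ would force $c=0$) does the essential work.
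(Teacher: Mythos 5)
Your reduction to a single involution via Cauchy's theorem, your structural analysis of order-$2$ elements in characteristic $2$ (they are transvections: $A\ve_1=\ve_1$, $A\ve_2=\ve_2+c\,\ve_1$ with $c\neq 0$), and your observation that (M.2) alone cannot give a contradiction because $\det A=1$ forces $S(A\vu,A\vv)=S(\vu,\vv)$ — all of this matches the paper's proof. But the decisive step, which you yourself flag as ``delicate bookkeeping'' still to be done, is exactly where the entire content of the lemma lives, and your sketch of how it should close is not the mechanism that actually works. You propose to force an identity of the form $\omega_p^{\Tr{(\cdot\, c)}}=1$ \emph{for all values of a free parameter} and invoke nondegeneracy of the trace form. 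In fact no free parameter is varied: the contradiction comes from a \emph{single} well-normalized pair. The missing idea is the chain starting from $1=m(\vf_1+\vf_2,\vf_1+\vf_2)$ (property (M.1) applied to the direction $\F(\vf_1+\vf_2)$), expanded via the cocycle identity into $m(\vf_2,\vf_2)\,m(\vf_1+\vf_2,\vf_1)\,\overline{m(\vf_1,\vf_2)}$, followed by the recognition that $(\vf_1+\vf_2,\vf_1)=(A\vf_2,A\vf_1)$, so that $A$-invariance converts this into $m(\vf_2,\vf_1)\overline{m(\vf_1,\vf_2)}$, i.e.\ $m$ must be \emph{symmetric} on the pair $(\vf_1,\vf_2)$. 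By (M.2) this equals $(-1)^{\Tr{\sym{\vf_1}{\vf_2}}}$, so the contradiction appears precisely when $\Tr{\sym{\vf_1}{\vf_2}}=1$.

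This exposes a second, smaller gap: you must arrange \emph{simultaneously} that $A\vf_2=\vf_1+\vf_2$ with coefficient exactly $1$ (not your general $c$) \emph{and} that $\Tr{\sym{\vf_1}{\vf_2}}=1$. These two requirements are coupled, because rescaling $\vf_2$ by $\lambda$ forces $\vf_1=A\vf_2+\vf_2$ to rescale by $\lambda$ as well, so $\sym{\vf_1}{\vf_2}$ rescales by $\lambda^2$. The paper resolves this by picking $\gamma\in\F$ with $\Tr{\gamma}=1$ (the trace is surjective onto $\Z_2$) and setting $\vf_1=(\beta\gamma)^{1/2}\ve_1$, $\vf_2=(\beta^{-1}\gamma)^{1/2}\ve_2$, using that squaring is a bijection of $\F$ in characteristic $2$; then $A\vf_2=\vf_1+\vf_2$ and $\sym{\vf_1}{\vf_2}=\gamma$. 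Without this normalization the symmetry $m(\vf_1,\vf_2)=m(\vf_2,\vf_1)$ is perfectly consistent with (M.2), and no contradiction arises. So your proposal has the right skeleton but is missing both the precise cocycle manipulation and the normalization that makes it bite.
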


Before proving the lemma, observe that in characteristic $p=2$ we have $+1 = -1$ in $\F$, and $\omega_2 = -1$ is the unique possible choice of a $2$-root of unity in $\C$. Moreover, the square map $\alpha\mapsto\alpha^2$ is an automorphism of $\F$ over $\Z_2$. Its inverse is the map $\alpha\mapsto\alpha^{1/2} = \alpha^{|\F|/2}$.

\begin{proof}[Proof of Lemma \ref{lem:main}]
By Proposition \ref{prop:Gquad-->Gmult}, it is enough to show that, if $G_0$ has even order, there do not exist $G_0$-invariant Weyl multipliers. So, let us assume by contradiction that $|G_0|$ is even and $m$ is a $G_0$-invariant Weyl multiplier. By Cauchy theorem (see~\cite[p.~97]{Suzuki82}), there exists an order $2$ element in $G_0$, that is, a symplectic map $A\in G_0$ such that $A\neq I$ and $A^2 = I$. Let $\ve_2\in V$ be such that $A\ve_2\neq\ve_2$. Then $\ve_1 = A\ve_2 + \ve_2 \neq 0$ because $+1=-1$, and $A\ve_1 = \ve_1$. Hence the vectors $\{\ve_1,\ve_2\}$ are linearly independent, and thus form an $\F$-linear basis of $V$. In particular, $S(\ve_1,\ve_2) = \alpha \neq 0$ since $S\neq 0$. Possibly rescaling $\ve_2$ by $\alpha^{-1}$, we can assume that $\{\ve_1,\ve_2\}$ is a symplectic basis of $(V,S)$. The conditons $\det A = 1$ and $A\ve_1 = \ve_1$ imply that $A$ is upper triangular with diagonal entries $(1,1)$ in the basis $\{\ve_1,\ve_2\}$; that is, $A\ve_2 = \beta\ve_1 + \ve_2$ for some $\beta\neq 0$.

Now, choose $\gamma\in\F$ such that ${\rm Tr}\,\gamma = 1$ (this is always possible by~\cite[Theorem VI.5.2]{LanAlg}). Let
$$
\vf_1 = (\beta\gamma)^{1/2} \, \ve_1 \qquad \qquad \vf_2 = (\beta^{-1}\gamma)^{1/2} \, \ve_2 \,.
$$
Then
\begin{equation}\label{eq:1}
A\vf_1 = \vf_1 \qquad \qquad A\vf_2 = \vf_1+\vf_2 
\end{equation}
and
\begin{equation}\label{eq:2}
{\rm Tr}\, S(\vf_1,\vf_2) = 1 \,.
\end{equation}
{We have
\begin{align*}
1 & = m(\vf_1+\vf_2,\vf_1+\vf_2) \qquad\qquad \text{(property (M.1))}\\
& = m(\vf_1+\vf_2,\vf_1+\vf_2) m(\vf_1,\vf_2) \overline{m(\vf_1,\vf_2)} \\
& = m(\vf_1+\vf_2+\vf_1,\vf_2) m(\vf_1+\vf_2,\vf_1) \overline{m(\vf_1,\vf_2)} \qquad\qquad \text{(multiplier property)} \\
& = m(\vf_2,\vf_2) m(A\vf_2,A\vf_1) \overline{m(\vf_1,\vf_2)} \qquad\qquad\text{(by \eqref{eq:1})}\\
& = m(\vf_2,\vf_1) \overline{m(\vf_1,\vf_2)} \qquad\qquad \text{(property (M.1) and $G_0$-invariance)} \\
& = (-1)^{{\rm Tr}\, \sym{\vf_1}{\vf_2}} \qquad\qquad \text{(property (M.2))} \\
& = -1 \qquad\qquad \text{(by \eqref{eq:2})}\,,
\end{align*}
which is the desired contradiction.}
\end{proof}

The next step is to list all the possible subgroups of $\SL$. By the previous result, for $p=2$ all the subgroups having even order can be dropped from $\gg$. The classification of the subgroups of the finite projective unimodular group ${\rm PSL}(V) = \SL /\{I,-I\}$ goes back to Moore and Wiman's papers \cite{Moore04,Wiman899}, which cover both the even and odd characteristic case (see~\cite[pp.~285-286]{Dickson58} for a summary of the subgroups found by Moore and Wiman). Note that ${\rm PSL}(V) = \SL$ {for $p=2$}, hence in our case~\cite{Moore04,Wiman899,Dickson58} actually enumerate all the subgroups of $\SL$. For the present purposes,  we use here the more modern version of Moore and Wiman's classification given in Suzuki's book \cite{Suzuki82}.

\begin{theorem}\label{teo:Suzuki}
In characteristic $p=2$, any subgroup of $\SL$ is isomorphic to one of the following groups.
\begin{enumerate}[(a)]
\item The dihedral groups of order $2(|\F|\pm 1)$ and their subgroups.
\item A group $H$ of order $|\F|(|\F|-1)$ and its subgroups. A Sylow $2$-subgroup $Q$ of $H$ is isomorphic to $\Z_2^k$, $Q$ is normal in $H$, and the factor group $H/Q$ is a cyclic group of order $|\F|-1$.
\item The alternating groups $A_4$ or $A_5$.
\item {${\rm SL}(V')$, where $V'$ is a $2$-dimensional vector space over a subfield $\F'\subseteq\F$.}
\end{enumerate}
\end{theorem}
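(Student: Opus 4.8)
\emph{The result is the classical Moore--Wiman--Dickson classification, and my plan is to reproduce the group-theoretic argument as organised in \cite{Suzuki82}, exploiting the natural action of $\SL$ on the set $\ss$ of directions, which I identify with the projective line $\proj{V}$ of $\mo{\F}+1$ points.} The first thing I would record is that in characteristic $2$ one has $-I=I$, so that $\SL={\rm PSL}(V)={\rm PGL}(V)$ and the action on $\ss$ is faithful and sharply $3$-transitive. The entire analysis then rests on the $2$-local structure: every nontrivial $2$-element of $\SL$ is an involution fixing exactly one direction, the Sylow $2$-subgroups are the unipotent groups $Q_D$ ($D\in\ss$), each elementary abelian of order $\mo{\F}$ and in bijection with the directions, and distinct Sylow $2$-subgroups intersect trivially. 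The stabiliser of a direction is a Borel subgroup $Q_D\rtimes T_D$ with $T_D$ cyclic of order $\mo{\F}-1$, while each element of odd order is semisimple and lies in a unique maximal torus, split (cyclic of order $\mo{\F}-1$) or nonsplit (cyclic of order $\mo{\F}+1$), whose normaliser is dihedral of order $2(\mo{\F}-1)$ or $2(\mo{\F}+1)$ respectively. These unipotent groups and maximal tori partition the nontrivial elements of $\SL$, and this partition is the combinatorial backbone of the proof.

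With this in hand I would split according to the parity of $\mo{G}$. If $\mo{G}$ is odd it is prime to $p=2$; a finite subgroup of ${\rm PGL}(V)$ of order prime to the characteristic is cyclic, dihedral, or one of $A_4$, $S_4$, $A_5$, and odd order leaves only the cyclic case, which is realised inside a single torus and hence appears as a subgroup of a dihedral group of type (a). If $\mo{G}$ is even, fix a Sylow $2$-subgroup $S$ of $G$; being contained in some $Q_D$, it is automatically elementary abelian, which already produces the group $Q$ of item (b) and, incidentally, rules out $S_4$ (whose Sylow $2$-subgroup is nonabelian). When the involutions of $G$ share a common fixed direction $D$ — equivalently when $S$ is normal and lies in a single $Q_D$ — the group $G$ is contained in the Borel $B_D$ and has exactly the structure of (b): normal elementary abelian $S=Q$ with cyclic complement of order dividing $\mo{\F}-1$.

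The remaining and essential case is when $G$ contains involutions with distinct fixed directions and fixes no direction of $\ss$. Here I would study the dihedral subgroups $\langle s,t\rangle$ generated by pairs of involutions: since $st$ has odd order and is therefore semisimple, $\langle s,t\rangle$ is dihedral with its rotation subgroup inside a torus. By analysing how the fixed directions of the involutions of $G$ are distributed in $\ss$, and which tori arise as $s,t$ vary, one shows that $G$ is either itself dihedral (type (a)), or that the matrix entries of its elements generate a subfield $\F'\subseteq\F$ for which $G$ is conjugate to ${\rm SL}(2,\F')$ (type (d)); the low-order degenerate configurations are precisely what produce the exceptional groups $A_4$ and $A_5={\rm SL}(2,\F_4)$ of type (c).

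The main obstacle is exactly this last step — bounding and then reconstructing the subgroup generated by several unipotent involutions coming from distinct Sylow subgroups, and pinning down the subfield $\F'$ they determine. I do not expect characteristic $2$ to simplify this significantly: although the elementary-abelian Sylow structure eliminates $S_4$ outright, the subfield case and the sporadic $A_4$, $A_5$ configurations remain and require Dickson's careful bookkeeping on fixed points and generated subfields. For this reason I would carry out the argument as in \cite[Ch.~3]{Suzuki82} (see also the summaries in \cite{Moore04,Wiman899,Dickson58}) rather than reprove the combinatorial core from scratch.
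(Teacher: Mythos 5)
Your proposal is correct and follows essentially the same route as the paper: both ultimately defer the combinatorial core to the Moore--Wiman--Dickson classification as organised in Suzuki (Theorems III.(6.25)--(6.26)), after the same characteristic-$2$ observations that ${\rm PSL}(V)=\SL$, that Sylow $2$-subgroups are elementary abelian (which excludes $\Sigma_4$), and that ${\rm PGL}(2,\F')={\rm PSL}(2,\F')={\rm SL}(2,\F')$ for subfields $\F'\subseteq\F$. Your outline of the $2$-local structure and of the partition of $\SL\setminus\{I\}$ into unipotent subgroups and maximal toruses is an accurate road map of what lies inside that citation, but since you explicitly decline to reprove the subfield/exceptional cases, the two arguments coincide in substance.
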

\begin{proof}
This is an immediate application of~\cite[Theorems III.(6.25) and III.(6.26)]{Suzuki82}, when $q=|\F|$ is even, since ${\rm PSL}(V) = \SL$ in this case. In particular, each item follows from the corresponding one in Suzuki's Theorem III.(6.25) by observing that
\begin{itemize}
\item[(a,b)] the greatest common divisor of $2$ and $|\F|-1$ is $d=1$, and using~\cite[I.(9.14)]{Suzuki82}, for the characterization of the elementary abelian $2$-groups defined in II.(5.22) therein;
\item[(c)] $\SL$ has no subgroups isomorphic to the symmetric group $\Sigma_4$ by~\cite[item (iii) of Theorem III.(6.26)]{Suzuki82};
\item[(d)] if $\F'$ is any field such that $|\F'|^m = |\F|$, then ${\rm PGL}(2,\F') = {\rm PSL}(2,\F') = {\rm SL}(2,\F')$ since $\F'$ is a subfield of $\F$ and hence also has even characteristic.
\end{itemize}
\end{proof}

Combining Lemma \ref{lem:main} and Theorem \ref{teo:Suzuki} we obtain the following conclusion.

\begin{proposition}\label{prop:iffodd}
Let $p=2$, and suppose $S$ is {any} symplectic form on $V$. Then the set $\qq_{G_0\rtimes V}(\Omega,S) = \qq_{G_0\rtimes V}(\Omega)\cap\qq_V(\Omega,S)$ is not empty if and only if $G_0$ is a cyclic group with $|G_0|$ odd.
\end{proposition}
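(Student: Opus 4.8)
The plan is to prove Proposition \ref{prop:iffodd} by combining the negative result of Lemma \ref{lem:main} with the classification of Theorem \ref{teo:Suzuki}, and then supplying the missing positive direction. By Proposition \ref{prop:Gquad-->Gmult}, the condition $\qq_{G_0\rtimes V}(\Omega,S)\neq\emptyset$ is equivalent to the existence of a $G_0$-invariant Weyl multiplier for $(V,S)$, so the entire argument can be recast in terms of Weyl multipliers.

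For the \emph{necessity} direction, suppose $\qq_{G_0\rtimes V}(\Omega,S)\neq\emptyset$. Lemma \ref{lem:main} immediately forces $|G_0|$ to be odd. It therefore remains to rule out every group in Theorem \ref{teo:Suzuki} that has odd order but fails to be cyclic. Here I would go through the list: the dihedral groups in item (a) and the full groups of item (b) all have even order (each contains an involution, equivalently a nontrivial Sylow $2$-subgroup), and the only groups of item (c) are $A_4$ and $A_5$, both of even order; likewise ${\rm SL}(V')$ in item (d) has even order whenever $|\F'|$ is even. Hence the only \emph{odd-order} subgroups that can survive are subgroups of the groups in items (a) and (b). In item (a), a subgroup of a dihedral group of odd order must lie inside the cyclic rotation subgroup, hence is cyclic. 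In item (b), an odd-order subgroup intersects the normal Sylow $2$-subgroup $Q\cong\Z_2^k$ trivially, so by the stated structure $H/Q$ cyclic it embeds in the cyclic group $H/Q$ and is therefore itself cyclic. Thus $G_0$ odd and $\qq_{G_0\rtimes V}(\Omega,S)\neq\emptyset$ together force $G_0$ cyclic of odd order.

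For the \emph{sufficiency} direction, I must show that every cyclic $G_0\subseteq\SL$ of odd order admits a $G_0$-invariant Weyl multiplier for the given $S$. The natural strategy is averaging. Start from any Weyl multiplier $m_0$ for $(V,S)$, which exists by the remark following Theorem \ref{teo:Vcov2}. The obstruction is that in characteristic $2$ the values of a Weyl multiplier are $\pm 1$, so one cannot simply take a multiplicative average over $G_0$ (dividing by $\sqrt{|G_0|}$ or taking $|G_0|$-th roots is not available inside $\{\pm1\}$). Instead I would use that the cohomology class of a Weyl multiplier is rigid: property (M.2) fixes the antisymmetric part $\overline{m(\vu,\vv)}m(\vv,\vu)=(-1)^{\Tr{S(\vu,\vv)}}$ independently of the representative, so any two Weyl multipliers differ by a \emph{symmetric} bicharacter, i.e.\ by a coboundary-type factor $\lambda(\vu)\lambda(\vv)/\lambda(\vu+\vv)$. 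Averaging is then performed at the level of this correction function $\lambda:V\to\{\pm1\}$ (or, more transparently, over a quadratic-form representative), where the relevant cohomology group is an $\F_2$-vector space and the order $|G_0|$ is odd hence invertible. Concretely, given $m_0$, form $m_0^A(\vu,\vv)=m_0(A\vu,A\vv)$ for $A\in G_0$; each $m_0^A$ is again a Weyl multiplier for the same $S$ (since $A$ is symplectic and property (M.2) is $A$-invariant), so $m_0^A/m_0$ is a symmetric coboundary, and averaging these coboundaries over the odd-order group $G_0$ produces the correction yielding a genuinely $G_0$-invariant $m$.

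The main obstacle is precisely this sufficiency step: making the averaging rigorous in characteristic $2$, where the target group $\{\pm1\}$ has no square roots. The key technical point is to transfer the problem to an additive $\F_2$- (or $\Z_{|\F|}$-) module on which division by the odd integer $|G_0|$ makes sense, and to verify that the averaged object still satisfies (M.1) and (M.2). I expect (M.2) to be automatic because it is already $G_0$-invariant and unchanged by symmetric corrections, while (M.1)---the requirement that $m$ be trivial along every direction $D\in\ss$---will need care, since the averaging must be arranged so as not to destroy the vanishing of $m$ on one-dimensional subspaces; this is where the bulk of the genuine work lies, and where I would lean on the explicit constructions of $m$ promised for Section \ref{sec:expl}.
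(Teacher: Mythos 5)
Your necessity argument is correct and is essentially the paper's: Lemma \ref{lem:main} forces $|G_0|$ to be odd, and running through Theorem \ref{teo:Suzuki} the only odd-order survivors are subgroups of the dihedral groups (which must lie in the cyclic rotation part, since any reflection is an involution) or odd-order subgroups of $H$ (which meet the normal Sylow $2$-subgroup $Q$ trivially and hence embed in the cyclic quotient $H/Q$); items (c) and (d) are excluded on parity grounds. The reduction to $G_0$-invariant Weyl multipliers via Proposition \ref{prop:Gquad-->Gmult} is also the right move.

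The sufficiency direction, however, contains a genuine gap, and it stems from a misdiagnosis of the obstruction. You assert that one ``cannot simply take a multiplicative average over $G_0$'' because division by $|G_0|$ or extraction of roots is unavailable, and you therefore set up a cohomological averaging of symmetric corrections that you do not carry out --- and whose key steps (that $m_0(A\,\cdot,A\,\cdot)/m_0$ is a coboundary of a function $\lambda$ living in a module where $|G_0|$ is invertible, and that the averaged correction still kills $m$ on every line $D$) are exactly the parts you leave open. In fact no normalization is needed at all: the \emph{unnormalized} product
\[
m(\vu,\vv)=\prod_{A\in G_0} m_0(A\vu,A\vv)
\]
is already the desired object. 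It is a multiplier (a product of multipliers is a multiplier), it satisfies (M.1) because every factor does, it is $G_0$-invariant because composing with $B\in G_0$ merely permutes the factors, and --- this is the only place oddness enters --- it satisfies (M.2) because
\[
\overline{m(\vu,\vv)}\,m(\vv,\vu)=(-1)^{|G_0|\Tr{\sym{\vu}{\vv}}}=(-1)^{\Tr{\sym{\vu}{\vv}}}
\]
since $|G_0|$ is odd. The distortion of the antisymmetric part that you feared is precisely what the odd-order hypothesis prevents, via $(-1)^{|G_0|}=-1$. Incidentally, your premise that Weyl multipliers in characteristic $2$ take values in $\{\pm1\}$ is false --- the paper shows no $\pm1$-valued Weyl multiplier exists there, the minimal values being fourth roots of unity --- but this does not affect the product argument.
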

\begin{proof}
The proof of sufficiency is a straightforward adaptation of the proof of~\cite[Proposition 8.3]{CaScTo16}. Indeed, suppose $G_0$ is a cyclic group with odd order. If $m_0$ is any Weyl multiplier for the symplectic space $(V,S)$, let $m(\vu,\vv) = \prod_{A\in G_0} m_0(A\vu,A\vv)$. Then $m$ is a multiplier of $V$, which clearly satisfies $\left. m\right|_{D\times D} = 1$ for all $D\in\ss$, since all its factors do it. Since $\overline{m_0(A\vu,A\vv)} m_0(A\vv,A\vu) = (-1)^{\Tr{\sym{A\vu}{A\vv}}} = (-1)^{\Tr{\sym{\vu}{\vv}}}$ for every $A\in G_0$, we have
$$
\overline{m(\vu,\vv)} m(\vv,\vu) = (-1)^{|G_0|\Tr{\sym{\vu}{\vv}}} = (-1)^{\Tr{\sym{\vu}{\vv}}}
$$
because $|G_0|$ is odd. Therefore, $m$ satisfies items (M.1) and (M.2) of Theorem \ref{teo:Vcov1}, that is, it is a Weyl multiplier for $(V,S)$. For all $B\in G_0$,
$$
m(B\vu,B\vv) = \prod_{A\in G_0} m_0(AB\vu,AB\vv) = \prod_{A\in G_0} m_0(A\vu,A\vv) = m(\vu,\vv) \,,
$$
which shows that $m$ is $G_0$-invariant. Hence $\qq_{G_0\rtimes V}(\Omega,S)\supset\qq_V(\Omega,S,m)\neq\emptyset$ by Theorem \ref{teo:Vcov2} and Proposition \ref{prop:Gquad-->Gmult}.

Conversely, if $\qq_{G_0\rtimes V}(\Omega)\neq\emptyset$, then $|G_0|$ is odd by Lemma \ref{lem:main}. So, we need to check which ones of the groups listed in Theorem \ref{teo:Suzuki} have odd order. Since $|A_4| = 12$ and $|A_5| = 60$, the possibilities in item (c) of Theorem \ref{teo:Suzuki} are excluded. Moreover, by~\cite[p.~81]{Suzuki82} we have $|{\rm SL}(V')| = |\F'|(|\F'|^2 - 1)$, which is even when $V'$ is a vector space over a subfield $\F'\subseteq\F$; hence $G_0$ can not be as in item (d) of Theorem \ref{teo:Suzuki}. Thus, items (a) and (b) are the only remaining possibilities.

The dihedral group $D_{2n}$ is the semidirect product $\Z_2\rtimes\Z_n$, where the nontrivial element $1\in\Z_2$ acts on the normal factor $\Z_n$ as
$$
(1,0)(0,x)(1,0)^{-1} = (0,-x) \qquad \forall x\in\Z_n \,.
$$
If $G_0$ is a subgroup of $D_{2(|\F|\pm 1)}$ and $(z,x)\in G_0$, then $z=0$, as otherwise $(1,x)^2 = (0,-x+x) = (0,0)$ implying that $|G_0|$ is even. It follows that $G_0\subseteq\Z_{|\F|\pm 1}$, hence $G_0$ is a cyclic group.

Finally, suppose $G_0\subseteq H$, where $H$ is {as in item} (b) of Theorem \ref{teo:Suzuki}. Then the subgroup $Q_0 = Q\cap G_0$ is normal in $G_0$, and the quotient group $G_0/Q_0$ is naturally identified with a subgroup of $H/Q$. Since $Q$ is a Sylow $2$-subgroup of $H$, either $Q_0$ is trivial or its order is even; hence $Q_0$ is trivial because $|G_0|$ is odd. Since $H/Q$ is cyclic of order $|\F|-1$, also its subgroup $G_0/Q_0 = G_0$ is cyclic.

In conclusion, $|G_0|$ {being} odd implies that $G_0$ is cyclic, and this concludes the proof.
\end{proof}

%%%%%%%%%%%%%%%%%%%%%%%%%%%%%%%
\section{Cyclic subgroups of $\SL$}\label{sec:cyclic}
%%%%%%%%%%%%%%%%%%%%%%%%%%%%%%%

By Proposition \ref{prop:iffodd},
$$
{\gg = \{G_0\subset\SL\mid \text{$G_0$ is cyclic and with odd order}\}\quad\text{in characteristic $p=2$}\,.}
$$
We will shortly see that the cyclic subgroups $G_0\subset\SL$ divide into three classes, each class  being determined by the eigenvalues of any of its generators. Recall that the eigenvalues of an arbitrary symplectic map $A\in\SL$ are the roots of its characteristic polynomial
\begin{equation}\label{eq:char_pol}
p_A(X) = \det(A-XI) = X^2 - {\rm tr}(A) X + 1 \,,
\end{equation}
and thus they are two possibly coincident elements $\xi_1$ and $\xi_2$ of the quadratic extension $\tilde{\F}$ of $\F$. Since $p_A$ has coefficients in $\F$, either $\xi_1,\xi_2\in\F$ or $\xi_1,\xi_2\in\tilde{\F}\setminus\F$, and in the latter case $\xi_2 = \overline{\xi_1}$, where $\overline{\xi_1} = \xi_1^{|\F|}$ is the conjugate of $\xi_1$. Both of the eigenvalues are nonzero, and they satisfy the relations $\xi_1 + \xi_2 = \trt{A}$ and $\xi_1 \xi_2 = 1$. In particular, in even characteristic the equality $\xi_1 = \xi_2$ holds if and only if $\xi_1 = \xi_2 = 1$, and in this case $\trt{A} = 0$.

Again, for the remaining of the section we restrict ourselves to even characteristic. The following terminology then summarizes all the possibilities for an element $A\in\SL$ (see e.g.~\cite[p.~95]{Hump75}).
\begin{definition}\label{def:defsubgrSL}
In characteristic $p=2$, an element $A\in\SL$ is
\begin{enumerate}[-]
\item {\em split}, if $A=I$ or $A$ has two different eigenvalues $\xi,\xi^{-1}\in\F$;
\item {\em nonsplit}, if $A$ has two different eigenvalues $\xi,\xi^{-1}\in\tilde{\F}\setminus\F$, with $\xi^{-1} = \overline{\xi}$;
\item {\em unipotent}, if $A\neq I$ and $1$ is the sole eigenvalue of $A$.
\end{enumerate}
$A$ is {\em semisimple} if it is either split or nonsplit.
\end{definition}

Let us fix a basis of $V$ over $\F$, and write any element $A\in\SL$ as a unit determinant $2\times 2$ matrix with entries in $\F$ with respect to such a basis. If $A\in\SL$ is semisimple and $\xi,\xi^{-1}\in\tilde{\F}$ are its two eigenvalues, then
\begin{equation*}
A = U \left(\begin{array}{cc} \xi & 0 \\ 0 & \xi^{-1} \end{array}\right) U^{-1} \qquad \text{for some $2\times 2$ matrix $U$ with entries in $\tilde{\F}$}\,.
\end{equation*}
All the entries of $U$ can be chosen in $\F$ if and only if $A$ is split. In any case, $A^k = I$ if and only if $\xi^k=\xi^{-k}=1$, that is, the order $k_0$ of $A$ and $\xi$ coincide. Hence,
\begin{enumerate}[-]
\item if $A$ is split, then $k_0$ divides the order of the cyclic multiplicative group $\F_*$ of the nonzero elements of $\F$, which is $|\F_*| = |\F|-1$;
\item if $A$ is nonsplit, then $k_0$ divides the order of the cyclic group $M = \{\xi\in\tilde{\F}_*\mid \xi\overline{\xi} = 1\}$, which is $|\F|+1$ (see~\cite[Section 8]{CaScTo16} for a simple proof).
\end{enumerate}
Finally, for $0<k<k_0$, the eigenvalues of $A^k$ are $\xi^k$ and $\xi^{-k}$. Therefore, if $A$ is semisimple, then also $A^k$ is semisimple for all $0<k<k_0$.

On the other hand, if $A$ is unipotent, there is a nonzero $\ve_1\in V$ such that $A\ve_1 = \ve_1$. To find the order of $A$, pick a vector $\ve_2\in V$ linearly independent from $\ve_1$. Then $A\ve_2 = \alpha\ve_2+\beta\ve_1$, with $\alpha=1$ by the unit determinant condition, and $\beta\neq 0$ because $A\neq I$. Moreover, we have $A^2\ve_1 = \ve_1$ and $A^2\ve_2 = \ve_2 + 2\beta\ve_1 = \ve_2$, hence $A^2 = I$. In particular, the order of $A$ is $2$.

This discussion shows that the next definition is consistent and exhausts all the cyclic subgroups of $\SL$.

\begin{definition}\label{def:toruses}
A cyclic subgroup of $\SL$ is a {\em torus} [respectively, a {\em split torus, nonsplit torus, unipotent subgroup}] if it is generated by a semisimple [resp., split, nonsplit, unipotent] element of $\SL$.
\end{definition}

Definitions \ref{def:defsubgrSL} and \ref{def:toruses} can be easily extended to odd $p$. It is then a general fact, valid in all characteristics, that there exists a maximal split [respectively, nonsplit] torus $T\subset\SL$, and all split [resp., nonsplit] toruses of $\SL$ are conjugate to subgroups of $T$. Moreover, all the unipotent subgroups of $\SL$ are conjugate in even characteristic, and they are divided into four conjugacy classes when $p\neq 2$. Indeed, this follows from~\cite[\S 6]{Moore04} (see also~\cite[pp.~262--268]{Dickson58} and \cite[III.(6.23)]{Suzuki82}). Here we report the following elementary proof in characteristic $p=2$.

\begin{proposition}\label{prop:cyclic}
Suppose $p=2$.
\begin{enumerate}[(a)]
\item There exists a split [respectively, nonsplit] torus $T\subset\SL$ such that $|T| = |\F|-1$ [resp., $|T| = |\F|+1$]. Any split [resp., nonsplit] torus has odd order and is conjugated to a subgroup of $T$. In particular, all toruses of the same order are conjugated.
\item There exists a unique conjugacy class of unipotent subgroups in $\SL$. All unipotent subgroups have order $2$.
\end{enumerate}
\end{proposition}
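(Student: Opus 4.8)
The plan is to treat the three types of cyclic subgroups uniformly, reducing every conjugacy statement to a single arithmetic fact: since $|\F_*| = |\F|-1 = 2^n-1$ is odd, the squaring map is a bijection of $\F_*$ (as already recorded for $\F$ over $\Z_2$). I will invoke this twice — once to normalise the determinant of a conjugating matrix, so that a conjugacy produced a priori in ${\rm GL}(V)$ can be realised inside $\SL$, and once to absorb the free scalar parameter in the unipotent normal form. Concretely, whenever $P\in{\rm GL}(V)$ conjugates one element to another, replacing $P$ by $cP$ with $c^2 = (\det P)^{-1}$ (such $c$ exists by the squaring bijection) yields $cP\in\SL$ inducing the same conjugation.

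For part (a), in the split case I fix a basis, pick a generator $\xi$ of the cyclic group $\F_*$ of order $|\F|-1$, and let $T$ be generated by the diagonal matrix with entries $\xi,\xi^{-1}$; this is a split torus of order $|\F|-1$, which is odd. In the nonsplit case I pick a generator $\xi$ of $M=\{\xi\in\tilde{\F}_*\mid\xi\overline{\xi}=1\}$ of order $|\F|+1$; since $\xi\notin\F$, the polynomial $X^2-(\xi+\overline{\xi})X+1$ has coefficients in $\F$ and is irreducible, so its companion matrix $A$ lies in $\SL$ and generates a nonsplit torus of order $|\F|+1$, again odd. For an arbitrary split $T'=\langle A'\rangle$, the generator is diagonalisable over $\F$, hence ${\rm GL}(V)$-conjugate to a diagonal matrix with entries $\eta,\eta^{-1}\in\F_*$; writing $\eta=\xi^k$ identifies this diagonal matrix as a power of $A$, so after the determinant rescaling $T'$ is $\SL$-conjugate to a subgroup of $T$. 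For a nonsplit $T'$ the generator has irreducible characteristic polynomial, so by rational canonical form it is ${\rm GL}(V)$-conjugate to the corresponding companion matrix, which coincides with $A^k$ once the eigenvalue is written as $\xi^k\in M$; the same rescaling places the conjugacy in $\SL$. Finally, since $T$ is cyclic it has a unique subgroup of each order, and since $\gcd(|\F|-1,|\F|+1)=1$ the order of a nontrivial torus determines its type; hence two toruses of equal order are both $\SL$-conjugate to the unique subgroup of that order in the appropriate $T$, and therefore to each other.

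For part (b), every unipotent element has order $2$ by the computation preceding Definition \ref{def:toruses}, and conversely every order-$2$ element $A$ satisfies $(A-I)^2 = A^2+I = 0$ in characteristic $2$ and so is unipotent; thus every unipotent subgroup has order $2$. To obtain a single conjugacy class I triangularise a given unipotent $A'$ by choosing a basis whose first vector spans the fixed line, writing $A'$ as $\left(\begin{smallmatrix} 1 & \beta \\ 0 & 1 \end{smallmatrix}\right)$ with $\beta\neq0$. Conjugating further by the diagonal matrix with entries $t,t^{-1}$ turns $\beta$ into $t^2\beta$, so choosing $t^2=\beta^{-1}$ (again the squaring bijection) normalises $\beta$ to $1$. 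After the usual determinant rescaling both steps lie in $\SL$, so every unipotent element is $\SL$-conjugate to the single shear $\left(\begin{smallmatrix} 1 & 1 \\ 0 & 1 \end{smallmatrix}\right)$, giving uniqueness of the class.

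The only genuine subtlety, and the main point to get right, is that the conjugacies arise most naturally in ${\rm GL}(V)$ — through diagonalisation, rational canonical form, or the triangularising change of basis — and must in every instance be pushed into $\SL$. This is precisely where the oddness of $|\F|-1$, via surjectivity of squaring on $\F_*$, is indispensable, and it is also the feature that separates the even characteristic case from the odd one, where the unipotent elements fall into several distinct classes.
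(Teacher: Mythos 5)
Your proof is correct, but it reaches the conjugacy statements by a different mechanism than the paper. The paper's proof is built on a single explicit claim: for every $A\in\SL\setminus\{I\}$ with $\trt{A}=\xi+\xi^{-1}$ it writes down, in three cases ($\beta\neq0$, $\gamma\neq0$, $\beta=\gamma=0$), a concrete unit-determinant matrix $U$ conjugating $A$ to the fixed representative $A_\xi=\left(\begin{smallmatrix}\xi+\xi^{-1} & 1\\ 1 & 0\end{smallmatrix}\right)$, so that the trace alone determines the $\SL$-conjugacy class of a non-identity element; parts (a) and (b) are then read off from this. You instead obtain conjugacy in ${\rm GL}(V)$ by standard canonical forms (diagonalisation, rational canonical form, triangularisation) and then push each conjugation into $\SL$ by replacing the conjugating matrix $P$ with $cP$ where $c^2=(\det P)^{-1}$, which exists because squaring is a bijection of $\F_*$ (as $|\F_*|=2^n-1$ is odd). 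Both arguments exploit even characteristic at exactly the same point --- the paper through the square roots $\beta^{1/2},\gamma^{1/2}$ appearing in its explicit $U$, you through the surjectivity of squaring --- so the two are close in spirit; yours is more conceptual and makes the role of $2\nmid|\F_*|$ transparent (and correctly explains why unipotent classes split up when $p\neq2$), while the paper's is more self-contained and hands you the conjugator explicitly, which it reuses later. One phrasing to tighten: in the nonsplit case the companion matrix of the characteristic polynomial of $A'$ does not literally \emph{coincide} with $A^k$ (a power of a companion matrix is not itself in companion form); rather, $A^k$ has the same irreducible characteristic polynomial and is therefore ${\rm GL}(V)$-conjugate to that same companion matrix by a second application of rational canonical form, after which your rescaling argument closes the loop. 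Your converse in part (b), that every order-$2$ element is unipotent via $(A-I)^2=A^2+I=0$, is a small addition not needed for the statement but harmless and correct.
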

\begin{proof}
We preliminarly prove that, if $\xi\in\tilde{\F}$ is such that $\xi + \xi^{-1} \in \F$, then the conjugacy class of the symplectic map
\begin{equation}\label{eq:defA}
A_\xi = \left(\begin{array}{cc} \xi + \xi^{-1} & 1 \\ 1 & 0 \end{array}\right)
\end{equation}
is the set
$$
\cc(A_\xi) = \{A\in\SL\setminus\{I\}\mid \text{$\xi$ and $\xi^{-1}$ are the eigenvalues of $A$}\} \,.
$$
Indeed, by \eqref{eq:char_pol} the latter set is $\cc(A_\xi) = \{A\in\SL\setminus\{I\}\mid \trt{A} = \xi + \xi^{-1}\}$. Therefore, $A\in\cc(A_\xi)$, and it suffices to show that any $A\in\SL\setminus\{I\}$ is such that
$$
A = U \left(\begin{array}{cc} \trt{A} & 1 \\ 1 & 0 \end{array}\right) U^{-1} \qquad \text{for some $U\in\SL$} \,.
$$
Writing $A$ in  matrix form
$$
A = \left(\begin{array}{cc} \alpha & \beta \\ \gamma & \delta \end{array}\right) \qquad \text{with} \qquad \alpha,\beta,\gamma,\delta\in\F, \ \alpha\delta+\beta\gamma = 1 \,,
$$
it can be directly verified that a possible choice of $U$ is
$$
U = \begin{cases}
\left(\begin{array}{ccc} 0 &\,& \beta^{1/2} \\ \beta^{-1/2} &\,& \alpha\beta^{-1/2} \end{array}\right) & \text{ if $\beta\neq 0$} \\
\left(\begin{array}{ccc} \gamma^{-1/2} &\,& \delta\gamma^{-1/2} \\ 0 &\,& \gamma^{1/2} \end{array}\right) & \text{ if $\gamma\neq 0$} \\
(1+\alpha)^{-1}\left(\begin{array}{ccc} \alpha & 1 \\ 1 & \alpha \end{array}\right) & \text{ if $\beta = \gamma = 0$ and $\delta=\alpha^{-1}$}
\end{cases}
$$
thus proving the claim.

In order to prove (a), observe first of all that $\F_* = \{1\}$ if and only if $\F = \Z_2$, and the claims for split toruses are trivial in this case since $T=\{I\}$ is the unique split torus of $\SL$. Next, suppose $\xi\neq 1$ is a generator of the cyclic group $\F_*$ [resp., $M = \{\xi\in\tilde{\F}_*\mid \xi\overline{\xi} = 1\}$], and define $A_\xi$ as in \eqref{eq:defA}. Then $A_\xi$ is a split [resp., nonsplit] element of the same order as $\xi$, that is, $A_\xi$ generates a split [resp., nonsplit] torus $T$ of order $|T|=|\F_*|=|\F|-1$ [resp., $|T|=|M|=|\F|+1$]. If $T'$ is any split [resp., nonsplit] torus generated by a split [resp., nonsplit] element $A'\in\SL$, either $A'=I$ or $A'$ has two different eigenvalues $\xi'$ and $\xi^{\prime\,-1}$ with $\xi'\in\F_*$ [resp., $\xi'\in M$]. It follows that $\xi'=\xi^k$ for some $k$, hence $A'\in\cc(A_\xi^k)$ by the previous claim. Therefore, $T'$ is conjugated to the cyclic subgroup of $T$ generated by $A_\xi^k$. Since $T$ has a unique cyclic subgroup of each order dividing $|T|$ (see~\cite[Proposition I.4.2 and I.4.3(iv)]{LanAlg}), all split [resp., nonsplit] toruses of the same order are conjugated among them and with a unique subgroup of $T$. Finally, $|\F|-1 = 2^r - 1$ and $|\F|+1 = 2^r + 1$ are relatively prime, hence two toruses $T_1$ and $T_2$ such that $|T_1| = |T_2|$ are either both split or both nonsplit, and so they are conjugated.

The proof of (b) follows since by definition any unipotent element $B\in\SL$ is such that $B\neq I$ and $\xi = \xi^{-1} = 1$ are the two eigenvalues of $B$; all unipotent $B$'s are then conjugated to
$$
A_1 = \left(\begin{array}{cc}
0 & 1 \\ 1 & 0
\end{array}\right)
$$
by the claim at the beginning of the proof. Since $A_1^2 = I$, the same holds for $B$.
\end{proof}

\begin{remark}
The proof of Proposition \ref{prop:cyclic} also yields an explicit expression for a symplectic map $A$ generating a maximal cyclic subgroup of $\SL$. Indeed, such a map is given by \eqref{eq:defA} with $\xi$ a generator of either $\F_*$ or $M$ in the case of a maximal torus, or $\xi+\xi^{-1} = 0$ for unipotent subgroups. 
\end{remark}

We are now in position to state and prove the main result of the paper.

\begin{theorem}\label{theo:main}
In characteristic $p=2$, the set $\qq_{G_0\rtimes V}(\Omega,S)$ is nonempty if and only if $G_0$ is a torus. The maximal subgroups $G_0\subset\SL$ admitting $(G_0\rtimes V)$-covariant quadrature systems are either maximal split or maximal nonsplit toruses.
\end{theorem}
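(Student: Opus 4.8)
The plan is to deduce both assertions from Proposition \ref{prop:iffodd} together with the structural results of this section (the discussion preceding Definition \ref{def:toruses} and Proposition \ref{prop:cyclic}). Proposition \ref{prop:iffodd} already tells us that $\qq_{G_0\rtimes V}(\Omega,S)\neq\emptyset$ precisely when $G_0$ is cyclic of odd order, so the first sentence of the theorem reduces to the purely group-theoretic identity that a subgroup of $\SL$ is cyclic of odd order if and only if it is a torus. I would prove this identity in both directions. For the ``if'' direction, a torus is cyclic by Definition \ref{def:toruses}, and Proposition \ref{prop:cyclic}(a) guarantees that every split or nonsplit torus has odd order. For the ``only if'' direction, I would invoke the observation recorded after Definition \ref{def:toruses} that every cyclic subgroup of $\SL$ is either a torus or a unipotent subgroup, together with Proposition \ref{prop:cyclic}(b) stating that unipotent subgroups have order $2$; hence a cyclic subgroup of odd order cannot be unipotent and must be a torus. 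Combined with Proposition \ref{prop:iffodd}, this yields the first statement.

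For the second statement I would first establish maximality of the two distinguished toruses. Let $T$ be a maximal split torus, so $|T|=|\F|-1$, and suppose $T\subseteq G_0\in\gg$. By the first statement $G_0$ is a torus and $|T|$ divides $|G_0|$; if $G_0$ is split, then $|G_0|$ divides $|\F|-1$ by the discussion on the orders of semisimple elements, forcing $G_0=T$, whereas if $G_0$ were nonsplit, then $|\F|-1$ would divide $|\F|+1$, hence divide $2$, which is impossible for $|\F|>2$ since $|\F|-1$ is odd. The symmetric argument --- using that $|\F|+1\geq 3$ never divides $2$ --- shows a maximal nonsplit torus is always a maximal element of $\gg$. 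Conversely, given any maximal $G_0\in\gg$, it is a torus, say split; by Proposition \ref{prop:cyclic}(a) it is conjugate to a subgroup $H$ of the maximal split torus $T$, say $G_0=UHU^{-1}$ with $U\in\SL$. Since conjugation preserves eigenvalues, hence the split/nonsplit type and the order, $UTU^{-1}$ is again a split torus of order $|\F|-1$ and therefore lies in $\gg$ by the first statement; as it contains $G_0$, maximality forces $G_0=UTU^{-1}$, a maximal split torus. The nonsplit case is identical.

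The step I expect to require the most care is the interplay of the two torus types in the maximality argument, which rests entirely on the coprimality of $|\F|-1$ and $|\F|+1$ in even characteristic: their difference is $2$ and both are odd, so their greatest common divisor is $1$. This is exactly what prevents a split torus from being enlarged into a nonsplit one, and vice versa. The sole degenerate case is $\F=\Z_2$, where the maximal split torus is trivial and is absorbed into the nonsplit torus of order $3$, so that only the latter is a genuinely maximal subgroup; I would flag this exception explicitly, but otherwise the maximal subgroups of $\gg$ are exactly the maximal split and maximal nonsplit toruses.
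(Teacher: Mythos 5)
Your proposal is correct and follows exactly the route the paper intends: the paper's own proof is the one-line remark that the theorem "immediately follows by combining Propositions \ref{prop:iffodd} and \ref{prop:cyclic}", and your argument is precisely the careful unpacking of that combination (cyclic of odd order $\Leftrightarrow$ torus, then the coprimality of $|\F|-1$ and $|\F|+1$ for the maximality claim). Your explicit flagging of the degenerate case $\F=\Z_2$, where the maximal split torus is trivial and hence not a maximal element of $\gg$, is a legitimate refinement that the paper glosses over.
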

\begin{proof}
The theorem immediately follows by combining Propositions \ref{prop:iffodd} and \ref{prop:cyclic}.
\end{proof}

\section{Maximally invariant Weyl multipliers}\label{sec:expl}

Up to now, we have considered the existence problem for $(G_0\rtimes V)$-covariant quadrature systems. However, when the set $\qq_{G_0\rtimes V}(\Omega,S)$ is nonempty, we have neither investigated whether it is made up of a unique equivalence class of quadratures, nor have we explicitly written down any of its elements.

In this section, we fill this gap in the case where $G_0\equiv T$ is a maximal torus in even characteristic, providing  many examples of inequivalent elements in $\qq_{T\rtimes V}(\Omega,S)$. Moreover, for all these examples we exhibit a unitary projective representation $U$ of $G=T\rtimes V$ yielding the covariance relation \eqref{eq:defU}.

By Theorem \ref{teo:Vcov2} and Proposition \ref{prop:Gquad-->Gmult}, the equivalence classes of quadratures in the set $\qq_{T\rtimes V}(\Omega,S)$ are in one-to-one correspondence with the $T$-invariant Weyl multipliers for the symplectic space $(V,S)$. If such a multiplier $m$ is given, Section \ref{subsec:explVcov} provides the explicit construction of the corresponding quadrature system in terms of $m$ (see \eqref{eq:quadexpl}). The main difficulty is then to write down an explicit expression for a $T$-invariant Weyl multiplier.

Note that an explicit formula for the multiplier $m$ also allows one to construct the projective representation $U$ of $T$ yielding the $T$-covariance of $\Po$. This follows from the next theorem.
\begin{theorem}{\cite[Theorem 8.5]{CaScTo16}}\label{teo:metap}
In any characteristic, let $T$ be a maximal torus, and suppose $\Po\in\qq_{T\rtimes V} (\Omega)$. Let $W_o$ be the Weyl system associated with $\Po$ and centered at the point $o\in\Omega$ such that ${\rm GL}(V)\cdot o = \{o\}$, and let $m$ be its Weyl multiplier. Then a possible choice for the projective representation $U$ of $T$ appearing in \eqref{eq:defU} is
\begin{equation}\label{eq:metap}
U(A) = \frac{1}{|\F|} \sum_{\vu\in V} m(\vu,(A-I)^{-1}\vu) W_o(\vu) \qquad \forall A\in T\setminus\{I\}\,.
\end{equation}
\end{theorem}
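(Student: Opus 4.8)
The plan is to reduce the entire statement to a single intertwining relation for the explicit operator in \eqref{eq:metap}, and then to deduce unitarity, the covariance relation \eqref{eq:defU}, and the projective multiplier all from it. First note that $(A-I)^{-1}$ makes sense: for $A\in T\setminus\{I\}$ in a torus the eigenvalues are $\xi,\xi^{-1}$ with $\xi\neq 1$, so $1$ is not an eigenvalue of $A$ and $A-I$ is invertible. Write $B=(A-I)^{-1}$, so that $AB=BA=B+I$ (the Cayley identity, $B$ being a function of $A$), and recall from Proposition \ref{prop:Gquad-->Gmult} that, since $\Po\in\qq_{T\rtimes V}(\Omega)$, the Weyl multiplier $m$ is $T$-invariant, i.e.\ $m(A\vu,A\vv)=m(\vu,\vv)$ for all $\vu,\vv\in V$. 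In particular $\vv\mapsto W_o(A\vv)$ is again a Weyl system for $(V,S)$ with the very same multiplier $m$, so by the finite Stone--von Neumann theorem (the $|\F|^2$ operators $W_o(\vu)$ are Hilbert--Schmidt orthogonal, hence $W_o$ is irreducible) there is a unitary intertwiner, unique up to a phase. The content of the theorem is that \eqref{eq:metap} exhibits one such intertwiner.

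The decisive step is therefore to verify
\begin{equation*}
U(A)\,W_o(\vv) = W_o(A\vv)\,U(A) \qquad \forall\vv\in V .
\end{equation*}
Using $W_o(\vu)W_o(\vv)=\overline{m(\vu,\vv)}\,W_o(\vu+\vv)$ on both sides and reindexing the sums by $\vw=\vu+\vv$ on the left and $\vw=A\vv+\vu$ on the right, and since the $W_o(\vw)$ are linearly independent, this relation is equivalent to the pointwise multiplier identity
\begin{equation*}
m\big(\vw-\vv,B(\vw-\vv)\big)\,\overline{m(\vw-\vv,\vv)} = m\big(\vw-A\vv,B(\vw-A\vv)\big)\,\overline{m(A\vv,\vw-A\vv)}
\end{equation*}
for all $\vv,\vw\in V$. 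Verifying this identity is the main obstacle. I would prove it by a bookkeeping argument using only the cocycle relation for $m$, properties (M.1) and (M.2) of Theorem \ref{teo:Vcov1}, the $T$-invariance $m(A\cdot,A\cdot)=m(\cdot,\cdot)$, and the Cayley identity in the form $B(\vw-A\vv)=B(\vw-\vv)-\vv$ (which links the two arguments $B(\vw-\vv)$ and $B(\vw-A\vv)$); the role of (M.2) is precisely to absorb the phases $\omega_p^{\Tr{\sym{\cdot}{\cdot}}}$ produced when transposing the entries of $m$. In odd characteristic one can shortcut this computation with the closed form \eqref{eq:WHmult_odd}, under which $m(\vu,B\vu)=\omega_p^{\Tr{\sym{2^{-1}B\vu}{\vu}}}$ is an honest quadratic Gauss factor and the identity becomes a standard symplectic Fourier computation; the even-characteristic case, where no such closed form is available, is the genuinely delicate one.

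Granting the intertwining relation, unitarity follows cheaply. Taking adjoints and using $W_o(\vu)^*=W_o(\vu)$ (valid since $\vu+\vu=0$ and $m(\vu,\vu)=1$ by (M.1) give $W_o(\vu)^2=\id$ in characteristic $2$), one obtains $U(A)^*W_o(\vv)=W_o(A^{-1}\vv)U(A)^*$, so $U(A)^*U(A)$ commutes with every $W_o(\vv)$ and equals a scalar $c\,\id$ by irreducibility. Computing the trace with the orthogonality $\trq{W_o(\vu)^*W_o(\vu')}=|\F|\,\delta_{\vu,\vu'}$ and $|m|=1$ yields $\trq{U(A)^*U(A)}=|\F|$, whence $c=1$ and $U(A)$ is unitary.

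Finally, the covariance relation \eqref{eq:defU} is obtained by inserting the intertwining relation into the explicit expansion of $\Po$ recalled in Section \ref{subsec:explVcov}, namely $\Po(o+\vv+\F\vu)=\tfrac1{|\F|}\sum_{\lam\in\F}\omega_p^{\Tr{\sym{\vv}{\lam\vu}}}W_o(\lam\vu)$ for $\vu\neq 0$. Since $o$ is fixed by ${\rm GL}(V)$ one has $A\cdot(o+\vv+\F\vu)=o+A\vv+\F(A\vu)$, while $U(A)W_o(\lam\vu)U(A)^*=W_o(\lam A\vu)$ and $\sym{\vv}{\lam\vu}=\sym{A\vv}{\lam A\vu}$ because $\det A=1$; these three facts combine to give $U(A)\Po(\lf)U(A)^*=\Po(A\cdot\lf)$ for every line $\lf\in\Aff{\Omega}$. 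The projective multiplier of $U$ is then automatic: $U(A)U(A')$ and $U(AA')$ both intertwine $W_o(\cdot)$ with $W_o(AA'\,\cdot)$, so by the uniqueness up to a phase noted above they differ by a scalar, i.e.\ $U$ is a projective representation of $T$, which together with $W_o$ furnishes the projective representation of $G=T\rtimes V$ required in \eqref{eq:defU}.
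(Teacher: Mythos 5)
Your overall architecture is sound and follows the natural route: reduce everything to the intertwining relation $U(A)W_o(\vv)=W_o(A\vv)U(A)$, obtain unitarity from Schur's lemma plus the trace normalization, and obtain \eqref{eq:defU} from the expansion of $\Po$ in terms of $W_o$. However, there is a genuine gap exactly where you flag ``the main obstacle'': the pointwise multiplier identity
\begin{equation*}
m\big(\vw-\vv,B(\vw-\vv)\big)\,\overline{m(\vw-\vv,\vv)} = m\big(\vw-A\vv,B(\vw-A\vv)\big)\,\overline{m(A\vv,\vw-A\vv)}
\end{equation*}
is never verified; you only announce that ``a bookkeeping argument'' using the cocycle relation, (M.1), (M.2), $T$-invariance and the Cayley identity would do it. That identity \emph{is} the theorem --- everything else in your write-up is soft (Schur, Hilbert--Schmidt orthogonality, reindexing) --- so naming the tools without carrying out the computation leaves the statement unproved. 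It is also not obviously routine: the paper itself records that the proof of this formula in \cite{CaScTo16} uses, besides the invertibility of $A-I$, the fact that $-I\in T$ (equivalently $m(-\vu,-\vv)=m(\vu,\vv)$), an ingredient absent from your list; in odd characteristic the sign bookkeeping coming from $W_o(\vv)^{-1}=m(\vv,-\vv)W_o(-\vv)$ cannot be absorbed without it. Your parenthetical $W_o(\vu)^*=W_o(\vu)$ is likewise valid only in characteristic $2$, while the theorem is stated for any characteristic (the surrounding Schur-plus-trace argument does survive, but as written the step is wrong in general).

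For comparison, the paper does not reprove the formula at all: it cites \cite[Theorem 8.5]{CaScTo16} for the nonsplit case and observes that the proof there uses only the two facts that $A-I$ is invertible on $V$ and that $-I\in T$, both of which persist for split toruses, so the same proof applies verbatim. If you want a self-contained argument along your lines, you must actually establish the displayed identity --- for instance substitute $\vx=\vw-\vv$, use $B(\vw-A\vv)=B\vx-\vv$ and $AB=B+I$, and expand both sides with the cocycle relation while tracking the $\omega_p^{\Tr{\sym{\cdot}{\cdot}}}$ phases from (M.2) together with the $-I$-invariance --- or else invoke the cited theorem as the paper does and supply only the transfer from the nonsplit to the split case.
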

\begin{proof}
If $T$ is nonsplit, this is Theorem 8.5 of~\cite{CaScTo16}. The proof of the latter result uses only the two facts that $A-I$ is invertible on $V$, and $-I  \in T$. These facts are still true if $T$ is split, hence the same proof works without any change also in the split case.
\end{proof}
The operators $W_o(\vu)$ appearing in Theorem \ref{teo:metap} are explicitly given in formula \eqref{eq:WHexpl}, which again only depends on the Weyl multiplier $m$ associated with $\Po$.

For the remaining part of the section, we turn to the problem of characterizing the $T$-invariant Weyl multipliers in characteristic $p=2$. We remark that the present discussion is a refinement of~\cite[Appendix B]{CaScTo16}, which outlines how to find a $T$-invariant Weyl multiplier by averaging a noninvariant one over $T$, but does not contain a compact formula for the result.

First of all, observe that, although in odd characteristic there is the natural choice of the Weyl multiplier \eqref{eq:WHmult_odd}, which takes its values in the set of the $p$-roots of unity and is actually invariant with respect to the whole group $\SL$, when $p=2$ a more elaborate construction is required. The key difference is that in the latter case there is no $\pm 1$-valued Weyl multiplier at all. Indeed, if $m$ were such a multiplier, then, for $\vf_1,\vf_2\in V$ with $\Tr{\sym{\vf_1}{\vf_2}} = 1$, we would get the contradiction
\begin{align*}
1 & = m(\vf_1+\vf_2,\vf_1+\vf_2) = m(\vf_1+\vf_2,\vf_1+\vf_2)m(\vf_1,\vf_2)^2 \\
& = m(\vf_1,\vf_2+\vf_1+\vf_2)m(\vf_2,\vf_1+\vf_2)m(\vf_1,\vf_2) \\
& = m(\vf_1,\vf_1)m(\vf_2,\vf_1+\vf_2)(-1)^{\Tr{\sym{\vf_2}{\vf_1}}}m(\vf_2,\vf_1) \\
& = -m(\vf_2,\vf_2+\vf_1)m(\vf_2,\vf_1) = -m(\vf_2+\vf_2,\vf_1)m(\vf_2,\vf_2)\\
& = -1 \,.
\end{align*}
Actually, in the even characteristic case the minimal possible choice of a Weyl multiplier is
$$
m(\vu,\vv) = i^{g(\vu,\vv)}
$$
for some function $g:V\times V\to\Z_4$. The function $g$ must clearly be a $\Z_4$-valued multiplier. Moreover, properties (M.1) and (M.2) of a Weyl multiplier become
\begin{enumerate}[(M'.1)]
\item for any $D\in\ss$, $g(\vd_1,\vd_2) = 0$ for all $\vd_1,\vd_2\in D$;
\item $g(\vv,\vu) - g(\vu,\vv) = 2\Tr{\sym{\vu}{\vv}}$ for all $\vu,\vv\in V$, where the map $z\mapsto 2z$ goes from $\Z_2$ to $\Z_4$.
\end{enumerate}
The additional condition that $m$ is $T$-invariant then requires that $g(A\vu,A\vv) = g(\vu,\vv)$ for some generator $A$ of $T$ and all $\vu,\vv\in V$.

In order to construct $g$, we need the fact that in even characteristic there exists a linear basis $\{\omega_1 , \omega_2 , \ldots , \omega_n\}$ of $\F$ over $\Z_2$ such that $\Tr{(\omega_i \omega_j)} = \delta_{i,j}$ for all $i,j\in\{1,2,\ldots,n\}$ (see~\cite[Theorem 4]{SeLe80}). After choosing such a basis, we also fix a sequence $r_1,r_2 ,\ldots,r_n$ with $r_i = \pm 1$. Then we define the following map $h : \F\to\Z_4$
$$
h\left(\sum_{i=1}^n z_i\omega_i\right) = \sum_{i=1}^n r_i z_i^2 \qquad \forall z_1,\ldots,z_n\in\Z_2 \,.
$$
Note that $h$ is well defined, since the map $z\mapsto z^2$ is well defined from $\Z_2$ to $\Z_4$. Clearly, $h(0)=0$. Moreover, if $\alpha = \sum_i z_i\omega_i$ and $\beta = \sum_i t_i\omega_i$, then
\begin{equation}\label{eq:prop_h}
h(\alpha+\beta) = h(\alpha)+h(\beta)+2\Tr{\alpha\beta} \,.
\end{equation}
The construction of $g$ is slightly different in the two cases in which the maximal torus $T$ is split or nonsplit.

\subsection{The split case}
Let $A$ be a generator of $T$ with eigenvalues $\xi,\xi^{-1}\in\F$, and let $\{\ve_1,\ve_2\}$ be vectors of $V$ such that $A\ve_1 = \xi\ve_1$ and $A\ve_2 = \xi^{-1} \ve_2$. Possibly rescaling $\ve_2$, we can assume that $\{\ve_1,\ve_2\}$ is a symplectic basis of $(V,S)$. We use this basis to define the following two $\F$-bilinear forms $B_+$ and $B_-$ on $V$
$$
B_+(\vu,\vv) = B_-(\vv,\vu) = \sym{\vu}{\ve_1}\sym{\vv}{\ve_2} \qquad \forall\vu,\vv\in V \,.
$$
Since $B_+(\vu,\vu) = B_-(\vu,\vu)$ for all $\vu\in V$, the sum $B_+ + B_-$ is a symplectic form on $V$. As $(B_+ + B_-)(\ve_1,\ve_2) = 1$, actually
$$
B_+ + B_- = S \,.
$$
Moreover, since $\sym{A\vu}{\ve_1} = \sym{\vu}{A^{-1}\ve_1} = \xi^{-1}\sym{\vu}{\ve_1}$ and similarly $\sym{A\vu}{\ve_2} = \xi\sym{\vu}{\ve_2}$, the bilinear forms $B_+$ and $B_-$ are $T$-invariant, that is
$$
B_+(A\vu,A\vv) = B_+(\vu,\vv) \qquad \text{and} \qquad B_-(A\vu,A\vv) = B_-(\vu,\vv) \qquad \forall\vu,\vv\in V \,.
$$
We then define a $\Z_4$-valued multiplier $g_0$ on $V$, given by
$$
g_0(\vu,\vv) = 2\Tr{B_+(\vu,\vv)} = 2\Tr{B_-(\vv,\vu)} \,.
$$
(That $g_0$ is a multiplier follows from its biadditivity property $g_0(\vu_1+\vu_2,\vv) = g_0(\vu_1,\vv)+g_0(\vu_2,\vv)$ and $g_0(\vu,\vv_1+\vv_2) = g_0(\vu,\vv_1)+g_0(\vu,\vv_2)$.) Condition (M'.2) holds for $g_0$. However, to make also condition (M'.1) satisfied, we need to introduce the equivalent $\Z_4$-valued multiplier $g$, with
\begin{align}\label{eq:defgsplit}
g(\vu,\vv) & = h(B_+(\vu+\vv,\vu+\vv)^{1/2}) - h(B_+(\vu,\vu)^{1/2}) - h(B_+(\vv,\vv)^{1/2}) + g_0(\vu,\vv) \,.
\end{align}
Indeed, for all $\lam,\mu\in\F$, by the property \eqref{eq:prop_h} of $h$,
\begin{eqnarray*}
g(\lam\vu,\mu\vu) & = & h((\lam+\mu)B_+(\vu,\vu)^{1/2}) - h(\lam B_+(\vu,\vu)^{1/2}) - h(\mu B_+(\vu,\vu)^{1/2}) \\
&& + 2\Tr{\lam\mu B_+(\vu,\vu)} \\
& = & 0 \,.
\end{eqnarray*}
Finally, from the analogous property of $B_+$ it immediately follows that $g(A\vu,A\vv) = g(\vu,\vv)$ for all $\vu,\vv\in V$, hence $g$ is $T$-invariant.

We have thus found the $T$-invariant Weyl multiplier $m=i^g$. We can use the construction of Section \ref{subsec:explVcov}, with the symplectic basis $\{\ve_1,\ve_2\}$ given by the above eigenbasis of $A$, in order to exhibit the quadrature system $\Po\in\qq_{T\rtimes V}(\Omega,S)$ having $m$ as its associated multiplier. To this aim, it is enough to evaluate
\begin{equation}\label{eq:mult_split}
m(\alpha_1\ve_1,\alpha_2\ve_2) = i^{h\left((\alpha_1\alpha_2)^{1/2}\right)}
\end{equation}
and insert it into \eqref{eq:WHexpl}, \eqref{eq:quadexpl} to get
\begin{align*}
\Po(o+\vv+\F\vu)\phi_\gamma & = \frac{1}{|\F|} \sum_{\lam\in\F} i^{h\left(\lam \left(\alpha_1\alpha_2\right)^{1/2}\right)} (-1)^{\Tr{\lam\left[\alpha_2(\beta_1  + \gamma) + \alpha_1\beta_2\right]}} \phi_{\gamma+\lam\alpha_1} \\
& \qquad \qquad \text{with} \qquad \vu = \alpha_1\ve_1+\alpha_2\ve_2 \qquad \vv = \beta_1\ve_1+\beta_2\ve_2\,
\end{align*}
with its associated centered Weyl system
$$
W_o(\alpha_1\ve_1+\alpha_2\ve_2)\phi_\gamma = i^{h\left((\alpha_1\alpha_2)^{1/2}\right)} (-1)^{\Tr{\alpha_2\gamma}} \phi_{\gamma+\alpha_1} \,.
$$
In order to determine the unitary operator $U(A)$ yielding the $T$-covariance, we can either use \eqref{eq:metap} or simply notice that $U(A)\phi_0 = c\phi_0$ for some scalar $c\in\C$, since $U(A)\Po(o+\F\ve_2) = \Po(o+\F\ve_2)U(A) \equiv c\Po(o+\F\ve_2)$ by $T$-covariance. On the other basis vectors,
$$
U(A)\phi_\gamma = U(A)W_o(\gamma\ve_1)\phi_0 = W_o(\gamma A\ve_1)U(A)\phi_0 = c W_o(\gamma\xi\ve_1)\phi_0 = c\phi_{\gamma\xi} \,.
$$
$U$ becomes an ordinary representation of $T$ by setting $c=1$.

\subsection{The nonsplit case}\label{subsec:nonsplit}
Let $A$ and $\xi,\xi^{-1}$ be as in the previous case. Now, $\xi,\xi^{-1}\in\tilde{\F}\setminus\F$ with $\xi^{-1} = \overline{\xi}$, and $A$ is diagonalized in the extension $\tilde{V} = \tilde{\F}\otimes_\F V$ of $V$ to the scalars $\tilde{\F}$. Let $\ve\in\tilde{V}$ be a nonzero vector such that $A\ve = \xi\ve$. Then $A\overline{\ve} = \overline{A\ve} = \overline{\xi}\overline{\ve}$, where we still denote by $\overline{\,\cdot\,}$ the $\tilde{\F}$-antilinear map on $\tilde{V}$ which restricts to the identity on $V$. {The $\F$-bilinear form $S$ uniquely extends to a symplectic form on $\tilde{V}$.} Note that $\sym{\overline{\vu}}{\overline{\vv}} = \overline{\sym{\vu}{\vv}}$. In particular, $\sym{\ve}{\overline{\ve}}\in\F$, hence, possibly rescaling both $\ve$ and $\overline{\ve}$ by the factor $\sym{\ve}{\overline{\ve}}^{-1/2}$, we can assume that $\{\ve,\overline{\ve}\}$ is a symplectic basis of $(\tilde{V},S)$. Now, as in the split case we define the $\tilde{\F}$-bilinear forms on $\tilde{V}$
$$
B_+(\vu,\vv) = B_-(\vv,\vu) = \sym{\vu}{\ve}\sym{\vv}{\overline{\ve}} \qquad \forall\vu,\vv\in \tilde{V} \,.
$$
Again, $B_+(\vu,\vu) = B_-(\vu,\vu)$ for all $\vu\in\tilde{V}$,  $B_+ + B_- = S$, and  the forms $B_+$ and $B_-$ are $T$-invariant. Moreover, although \(B_+\) and \(B_-\) are \(\tilde{\F}\)-valued bilinear forms, the corresponding quadratic forms are \(\F\) valued:  $B_+(\vu,\vu) = B_-(\vu,\vu)\in\F$ for all $\vu\in V$. Let $\widetilde{\rm Tr} : \tilde{\F}\to\Z_2$ be any $\Z_2$-linear extension of ${\rm Tr}$ to $\tilde{\F}$. (For example, if $\zeta$ is any element of $\tilde{\F}\setminus\F$, we can set $\widetilde{\rm Tr}(\alpha+\beta\zeta) = \Tr{\alpha}$ for all $\alpha,\beta\in\F$.) We then define the following $\Z_4$-valued biadditive multiplier $g_0$ on $V$
$$
g_0(\vu,\vv) = 2\widetilde{\rm Tr}\, B_+(\vu,\vv) = 2\widetilde{\rm Tr}\, B_-(\vv,\vu) \qquad\forall\vu,\vv\in V
$$
and its equivalent multiplier $g$ as in formula \eqref{eq:defgsplit}. Since $g_0$ satisfies condition (M'.2), so does $g$. Moreover, $g$ also fulfills (M'.1) and is $T$-invariant, the computation being the same as in the split case. In conclusion, $m = i^g$ is a $T$-invariant Weyl multiplier on $V$.

As in the previous section, we are now going to explicitly exhibit the $T\rtimes V$-covariant quadrature system $\Po\in\qq_V(\Omega,S,m)$ along the lines of Section \ref{subsec:explVcov}. In the present case, we fix the following symplectic basis $\{\ve_1,\ve_2\}$ of $V$
\begin{equation}\label{eq:nonsplit_basis}
\ve_1 = \overline{\eps}\ve + \eps\overline{\ve} \qquad \ve_2 = \eps\ve + \overline{\eps}\overline{\ve} \qquad \text{with} \qquad \eps = (\xi+1)^{-1/2} \,.
\end{equation}
Moreover, we choose the extension $\widetilde{\rm Tr}$ such that $\widetilde{\rm Tr}\,\eps^2 = 0$. Then, with some manipulations (reported in Appendix \ref{app:calc}),
\begin{align}
& m(\alpha_1\ve_1,\alpha_2\ve_2) = i^{h\left((\alpha_1\alpha_2)^{1/2}\right)}(-1)^{\Tr{\left[\alpha_1\alpha_2\eps\overline{\eps}+(\alpha_1+\alpha_2)(\alpha_1\alpha_2\eps\overline{\eps})^{1/2}\right]}} \label{eq:mult_non_split} \\
& \begin{aligned}
& m(\alpha_1\ve_1+\alpha_2\ve_2\,,\,(A+I)^{-1}(\alpha_1\ve_1+\alpha_2\ve_2)) = \\
& \qquad = i^{-\left[h\left(\alpha_1 (\eps\overline{\eps})^{1/2}\right) + h\left(\alpha_2 (\eps\overline{\eps})^{1/2}\right) + h\left((\alpha_1\alpha_2)^{1/2}\right)\right]}(-1)^{\Tr{\left[\alpha_1\alpha_2\eps\overline{\eps} + (\alpha_1+\alpha_2)(\alpha_1\alpha_2\eps\overline{\eps})^{1/2}\right]}} \,. 
\end{aligned}\label{eq:mult_non_split_2}
\end{align}
By \eqref{eq:WHexpl}, \eqref{eq:quadexpl},
\begin{align*}
& \Po(o+\vv+\F\vu)\phi_\gamma = \frac{1}{|\F|} \sum_{\lam\in\F} i^{h\left(\lam(\alpha_1\alpha_2)^{1/2}\right)} \\
& \qquad\qquad\qquad \times (-1)^{\Tr{\lam\left\{ \lam\left[\alpha_1\alpha_2\eps\overline{\eps}+(\alpha_1+\alpha_2)(\alpha_1\alpha_2\eps\overline{\eps})^{1/2}\right]+\alpha_2(\beta_1 + \gamma) + \alpha_1\beta_2\right\}}} \phi_{\gamma+\lam\alpha_1} \\
& W_o(\vu)\phi_\gamma = i^{h\left((\alpha_1\alpha_2)^{1/2}\right)}(-1)^{\Tr{\left[\alpha_1\alpha_2\eps\overline{\eps}+(\alpha_1+\alpha_2)(\alpha_1\alpha_2\eps\overline{\eps})^{1/2}+\alpha_2\gamma\right]}} \phi_{\gamma+\alpha_1} \\
& \qquad\qquad\qquad\qquad\qquad\qquad\qquad \text{with} \qquad \vu = \alpha_1\ve_1+\alpha_2\ve_2 \qquad \vv = \beta_1\ve_1+\beta_2\ve_2 \,.
\end{align*}
Moreover, by \eqref{eq:metap},
\begin{align*}
U(A)\phi_\gamma & = \frac{1}{|\F|} \sum_{\alpha_1,\alpha_2\in\F} m(\alpha_1\ve_1+\alpha_2\ve_2\,,\,(A+I)^{-1}(\alpha_1\ve_1+\alpha_2\ve_2)) \\
& \qquad \times W_o(\alpha_1\ve_1+\alpha_2\ve_2)\phi_\gamma \\
& = \frac{1}{|\F|} \sum_{\alpha_1,\alpha_2\in\F} i^{-\left[h\left(\alpha_1(\eps\overline{\eps})^{1/2}\right) + h\left(\alpha_2(\eps\overline{\eps})^{1/2}\right) \right]}(-1)^{\Tr{\alpha_2\gamma}} \phi_{\gamma+\alpha_1} \,.
\end{align*}

As a final consideration, observe that in both the split and nonsplit cases our construction provides a quite big amount of different $T$-invariant Weyl multipliers. Indeed, for a fixed choice of the orthonormal basis $\{\omega_1 , \omega_2 , \ldots , \omega_n\}$ of $\F$ over $\Z_2$, changing the sequence of signs $r_1,r_2 ,\ldots,r_n$ in the definition of $h$ yields $2^n$ different Weyl multipliers $m$; this can be seen by direct inspection of \eqref{eq:mult_split} and \eqref{eq:mult_non_split}. Consequently, the set $\qq_{T\rtimes V}(\Omega,S)$ contains at least $2^n$ inequivalent quadratures. This shows that in even characteristic there exists a large degree of arbitrarity in the choice of a maximally covariant quadrature system.

\section{Conclusions}

We have found all the extended symmetries of stabilizer MUBs in even prime-power dimensions beyond the basic group $V$ of phase-space translations. We have proved that only two inequivalent such extensions are possible, namely by means of either a split or a nonsplit torus $T\subset\SL$. In particular, it turns out that both of the possibilities give rise to whole families of inequivalent maximally symmetric stabilizer MUBs, contrasting with the case in odd prime-power dimensions, where the maximal symmetry requirement points out a single class of stabilizer MUBs. For each of the two extensions, we have focused on a particular family of inequivalent maximally symmetric stabilizer MUBs, providing both the explict form of the MUBs (more precisely, of their associated rank-$1$ projections, that we named {\em quadrature system}) and the expression of the covariance operators.

In the applications, one is usually interested in finding the smallest groups of unitary operators cycling all the bases in a given maximal set of MUBs \cite{Chau05,SuWo07,Su07,Zhu15,Zhu15quater}. For maximally symmetric stabilizer MUBs, this corresponds to requiring a maximal nonsplit torus as the extra symmetry group (see~\cite[Section 8]{CaScTo16}), since split toruses do not cycle the two bases corresponding to the directions they keep fixed.

As a final consideration, in our approach the symmetry properties of stabilizer MUBs are essentially related to their labelings with the phase-space lines. Indeed, for any pair of $V$-covariant quadratures $\Po_1$ and $\Po_2$, the two sets of rank-$1$ projections $\ran\Po_i = \{\Po_1(\lf)\mid\lf\in\Aff{\Omega}\}$ ($i=1,2$) are always unitarily conjugated by~\cite[Theorem 7.9]{CaScTo16}, although of course $\Po_1$ and $\Po_2$ may not be equivalent in the sense of \eqref{eq:def_equiv}. The present paper thus essentially dealt with the problem of how to arrange the phase-space labeling of stabilizer MUBs in order to make them `as much covariant as possible'. As pointed out in~\cite[Remark 7.8]{CaScTo16}, the covariance operators $U(g)$ satisfying \eqref{eq:defU} do not exhaust all unitaries preserving the (unlabeled) set of projections $\ran\Po$ of some $\Po\in\qq_{G_0\rtimes V}(\Omega)$. However, they are the only ones whose action on MUBs can be naturally related to a phase-space structure.

\section*{Acknowledgments}

The authors wish to thank Prof.~Markus Grassl for suggesting the problem. JS acknowledges financial support from the EU through the Collaborative Projects QuProCS (Grant Agreement No. 641277).

\appendix

\section{Supplemental material}\label{app:calc}

Here we provide the explicit calculations leading to \eqref{eq:mult_non_split} and \eqref{eq:mult_non_split_2}. For the $\Z_4$-valued multiplier $g$ found in Section \ref{subsec:nonsplit}, in the basis \eqref{eq:nonsplit_basis} and for $\alpha_1,\alpha_2\in\F$, we have
\begin{align*}
& g(\alpha_1\ve_1,\alpha_2\ve_2) = h\left(\left[\sym{\alpha_1\ve_1+\alpha_2\ve_2}{\ve}\sym{\alpha_1\ve_1+\alpha_2\ve_2}{\overline{\ve}}\right]^{1/2}\right) \\
& \qquad\qquad - h\left(\left[\sym{\alpha_1\ve_1}{\ve}\sym{\alpha_1\ve_1}{\overline{\ve}}\right]^{1/2}\right) - h\left(\left[\sym{\alpha_2\ve_2}{\ve}\sym{\alpha_2\ve_2}{\overline{\ve}}\right]^{1/2}\right) \\
& \qquad\qquad + 2\widetilde{\rm Tr}\, \sym{\alpha_1\ve_1}{\ve}\sym{\alpha_2\ve_2}{\overline{\ve}} \\
& \qquad = h\left(\left[(\alpha_1\eps+\alpha_2\overline{\eps})(\alpha_1\overline{\eps}+\alpha_2\eps)\right]^{1/2}\right) \\
& \qquad\qquad - h\left(\alpha_1\left(\eps\overline{\eps}\right)^{1/2}\right) - h\left(\alpha_2\left(\overline{\eps}\eps\right)^{1/2}\right) + 2\widetilde{\rm Tr}\, \alpha_1\alpha_2\eps^2 \\
& \qquad = h\left(\left[(\alpha_1^2+\alpha_2^2)\eps\overline{\eps}+\alpha_1\alpha_2\right]^{1/2}\right) \qquad\qquad\qquad\quad \text{because $\eps^2+\overline{\eps}^2 = 1$}\\
& \qquad\qquad - h\left(\alpha_1\left(\eps\overline{\eps}\right)^{1/2}\right) - h\left(\alpha_2\left(\overline{\eps}\eps\right)^{1/2}\right) \qquad\qquad \text{because $\widetilde{\rm Tr}\,\eps^2 = 0$} \\
& \qquad = h\left((\alpha_1+\alpha_2)(\eps\overline{\eps})^{1/2}+(\alpha_1\alpha_2)^{1/2}\right) \qquad \qquad \quad \text{by $\Z_2$-linearity of $\cdot\,^{1/2}$}\\
& \qquad\qquad - h\left(\alpha_1\left(\eps\overline{\eps}\right)^{1/2}\right) - h\left(\alpha_2\left(\overline{\eps}\eps\right)^{1/2}\right) \\
& \qquad = h\left((\alpha_1\alpha_2)^{1/2}\right) + 2\Tr{\left[\alpha_1\alpha_2\eps\overline{\eps}+(\alpha_1+\alpha_2)(\alpha_1\alpha_2\eps\overline{\eps})^{1/2}\right]} \qquad \text{by \eqref{eq:prop_h}} \,.
\end{align*}
This proves \eqref{eq:mult_non_split}. Concerning \eqref{eq:mult_non_split_2},
\begin{align*}
& g(\alpha_1\ve_1+\alpha_2\ve_2\,,\,(A+I)^{-1}(\alpha_1\ve_1+\alpha_2\ve_2)) = \\
& \qquad = g(\alpha\ve+\overline{\alpha}\overline{\ve}\,,\,(A+I)^{-1}(\alpha\ve+\overline{\alpha}\overline{\ve})) \qquad\qquad\qquad \text{with $\alpha = \alpha_1\overline{\eps}+\alpha_2\eps$} \\
& \qquad = g(\alpha\ve+\overline{\alpha}\overline{\ve}\,,\,\eps^2\alpha\ve+\overline{\eps}^2\overline{\alpha}\overline{\ve}) \\
& \qquad = h\left(\left[\sym{(1+\eps^2)\alpha\ve+(1+\overline{\eps}^2)\overline{\alpha}\overline{\ve}}{\ve}\sym{(1+\eps^2)\alpha\ve+(1+\overline{\eps}^2)\overline{\alpha}\overline{\ve}}{\overline{\ve}}\right]^{1/2}\right) \\
& \qquad\qquad - h\left(\left[\sym{\alpha\ve+\overline{\alpha}\overline{\ve}}{\ve}\sym{\alpha\ve+\overline{\alpha}\overline{\ve}}{\overline{\ve}}\right]^{1/2}\right) \\
& \qquad\qquad - h\left(\left[\sym{\eps^2\alpha\ve+\overline{\eps}^2\overline{\alpha}\overline{\ve}}{\ve}\sym{\eps^2\alpha\ve+\overline{\eps}^2\overline{\alpha}\overline{\ve}}{\overline{\ve}}\right]^{1/2}\right) \\
& \qquad = h\left(\left[(1+\overline{\eps}^2)\overline{\alpha}(1+\eps^2)\alpha\right]^{1/2}\right) - h\left(\left(\overline{\alpha}\alpha\right)^{1/2}\right) - h\left(\left(\overline{\eps}^2\overline{\alpha}\eps^2\alpha\right)^{1/2}\right) \\
& \qquad = - h\left(\left(\overline{\alpha}\alpha\right)^{1/2}\right) \qquad\qquad\qquad\qquad\qquad\qquad\qquad \text{because $\eps^2+\overline{\eps}^2 = 1$} \\
& \qquad = - h\left(\left[(\alpha_1^2+\alpha_2^2)\eps\overline{\eps} + \alpha_1\alpha_2\right]^{1/2}\right) \qquad\qquad\qquad\quad \text{because $\eps^2+\overline{\eps}^2 = 1$} \\
& \qquad = - h\left((\alpha_1+\alpha_2)(\eps\overline{\eps})^{1/2} + (\alpha_1\alpha_2)^{1/2}\right) \qquad\qquad\quad \text{by $\Z_2$-linearity of $\cdot\,^{1/2}$} \\
& \qquad = - \left[h\left(\alpha_1(\eps\overline{\eps})^{1/2}\right) + h\left(\alpha_2(\eps\overline{\eps})^{1/2}\right) + h\left((\alpha_1\alpha_2)^{1/2}\right)\right] \\
& \qquad\quad + 2\Tr{\left[\alpha_1\alpha_2\eps\overline{\eps} + (\alpha_1+\alpha_2)(\alpha_1\alpha_2\eps\overline{\eps})^{1/2}\right]} \qquad\qquad\qquad\qquad\quad \text{by \eqref{eq:prop_h}}\,,
\end{align*}
which gives \eqref{eq:mult_non_split_2}.

\end{document}